\newtheorem{theorem}{Theorem}
\theoremstyle{plain}
\newtheorem{definition}{Definition}
\newtheorem{lemma}{Lemma}
\newtheorem{notation}{Notation}
\newtheorem{proposition}{Proposition}
\newtheorem{remark}{Remark}
\numberwithin{equation}{section}
\numberwithin{theorem}{section}
\numberwithin{lemma}{section}
\numberwithin{proposition}{section}
\numberwithin{corollary}{section}
\let\pdfoutput=\undefined\fi
\begin{document}
\title[Ultrametric Transition Networks]{Ultrametric Diffusion, Rugged Energy Landscapes and Transition Networks}
\author[Z\'{u}\~{n}iga-Galindo]{W. A. Z\'{u}\~{n}iga-Galindo}
\address{University of Texas Rio Grande Valley\\
School of Mathematical \& Statistical Sciences\\
One West University Blvd\\
Brownsville, TX 78520, United States}
\email{wilson.zunigagalindo@utrgv.edu}
\thanks{The author was partially supported by the Lokenath Debnath Endowed
Professorship, UTRGV}
\keywords{Ultrametricity, $p$-adic diffusion, Markov processes, energy landscapes,
protein folding, $p$-adic mathematical physics.}

\begin{abstract}
In this article we introduce the ultrametric networks which are $p$-adic
continuous analogues of the standard Markov state models constructed using
master equations. A $p$-adic transition network (or an ultrametric network) is
a model of a complex system consisting of a hierarchical energy landscape, a
Markov process on the energy landscape, and a master equation. The energy
landscape consists of a finite number of basins. Each basin is formed by
infinitely many network configurations organized hierarchically in an infinite
regular tree. The transitions between the basins are determined by a
transition density matrix, whose entries are functions defined on the energy
landscape. The Markov process in the energy landscape encodes the temporal
evolution of the network as random transitions between configurations from the
energy landscape. The master equation describes the time evolution of the
density of the configurations. We focus on networks where the transition rates
between two different basins are constant functions, and the jumping process
inside of each basin is controlled by a $p$-adic radial function. We
solve\ explicitly the Cauchy problem for the master equation attached to this
type of networks. The solution of this problem is the network response to a
given initial concentration. If the Markov process attached to the network is
conservative, the long term response of the network is controlled by a Markov
chain. If the process is not conservative the network has absorbing states. We
define an absorbing time, which depends on the initial concentration,\ if this
time is finite the network reaches an absorbing state in a finite time. We
identify in the response of the network the terms responsible for bringing the
network to an absorbing state, we call them the fast transition modes. The
existence of the fast transition modes is a consequence of the assumption that
the energy landscape is ultrametric (hierarchical), and to the best of our
understanding this result cannot be obtained using standard methods of Markov
state models. Nowadays, it is widely accepted that protein native states are
kinetic hubs that can be reached quickly from any other state. The existence
of fast transition modes implies that certain states on an ultrametric network
work as kinetic hubs.

\end{abstract}
\subjclass{35S05, 60J76, 82C44}
\maketitle

\section{Introduction}

A central paradigm in physics of complex systems (such proteins or glasses)
asserts that the dynamics of such systems can be modeled as a random walk in
the energy landscape of the system, see \cite{Fraunfelder et al}%
-\cite{Fraunfelder et al 3}, \cite{Wales}, see also \cite{KKZuniga},
\cite{Kozyrev SV}, and the references therein. By a complex system we
understand one that can assume a very large number of non-identical
conformations (states) that are organized hierarchically. Fluctuations and
relaxations correspond to jumps between states. The hierarchical structure of
the energy landscapes of the mentioned systems was discovered in the 80s by
Parisi et al in the context of the spin glass theory \cite{M-P-V}, see also
\cite{R-T-V}, and by Frauenfelder in the context of protein physics
\cite{Fraunfelder et al}.

These landscapes have a huge number of local minima. It is clear that a
description of the dynamics on such landscapes require an adequate
approximation. The interbasin kinetics offers an acceptable solution to this
problem. By using this approach an energy landscape is approximated by an
ultrametric space (a rooted tree) and a function on this space describing the
distribution of the activation barriers, see e.g. \cite{Becker et al},
\cite{Stillinger et al 1}-\cite{Stillinger et al 2}.

The dynamics is controlled by a master equation of the form%
\begin{equation}
\frac{d}{dt}p_{I}\left(  t\right)  =\sum_{J}\left\{  w_{I,J}p_{J}\left(
t\right)  -w_{J,I}p_{I}\left(  t\right)  \right\}  , \label{1}%
\end{equation}
where $I$, $J$ runs through the leaves of the disconnectivity graph (a rooted
tree), $w_{I,J}\geq0$ is the transition rate peer unit of time from state $J$
to state $I$, and $p_{I}\left(  t\right)  $\ is the probability of finding the
system at state $I$ at time $t$, see e.g. \cite{KKZuniga}, \cite{Kozyrev SV},
and the references therein. In practical applications, the matrices $\left[
w_{I,J}\right]  $ are very large, consequently, the study of the qualitative
behavior of complex systems via the master equation approach is extremely difficult.

A finite rooted tree is the most basic example of an ultrametric space. A
\ metric space $(M,d)$ is called ultrametric if the metric $d$ satisfies
$d(a,b)\leq\max\left\{  d(a,b),d(c,b)\right\}  $ for any three points $a$,
$b$, $c$ in $M$. The ultrametric spaces constitute the natural language to
formulate models of phenomena where hierarchy plays a central role.

In this article we introduce a new $p$-adic exactly solvable models for the
dynamics of certain complex systems, and provide an application to protein
folding. The field of $p$-adic numbers $\mathbb{Q}_{p}$ has a paramount role
in the category of ultrametric spaces. From now on $p$ denotes a fixed prime
number. A $p$-adic number is a series of the form%
\begin{equation}
x=x_{-k}p^{-k}+x_{-k+1}p^{-k+1}+\ldots+x_{0}+x_{1}p+\ldots,\text{ with }%
x_{-k}\neq0\text{,}\label{p-adic-number}%
\end{equation}
where the $x_{j}$s \ are $p$-adic digits, i.e. numbers in the set $\left\{
0,1,\ldots,p-1\right\}  $. The set of all possible series of form
(\ref{p-adic-number}) constitutes the field of $p$-adic numbers $\mathbb{Q}%
_{p}$. There are natural field operations, sum and multiplication, on series
of form (\ref{p-adic-number}), see e.g. \cite{Koblitz}. There is also a
natural norm in $\mathbb{Q}_{p}$ defined as $\left\vert x\right\vert
_{p}=p^{k}$, for a nonzero $p$-adic number $x$ of the form
(\ref{p-adic-number}). The field of $p$-adic numbers with the distance induced
by $\left\vert \cdot\right\vert _{p}$ is a complete ultrametric space. The
ultrametric property refers to the fact that $\left\vert x-y\right\vert
_{p}\leq\max\left\{  \left\vert x-z\right\vert _{p},\left\vert z-y\right\vert
_{p}\right\}  $ for any $x$, $y$, $z$ in $\mathbb{Q}_{p}$. The field of
$p$-adic numbers has a fractal structure, see e.g. \cite{Alberio et al},
\cite{V-V-Z}.

The construction of a mathematical theory for master equations of type
(\ref{1}) over arbitrary finite ultrametric spaces (for instance general
graphs) is a natural step in understanding the dynamics of complex systems. In
our view there are two approaches to this problem. The first one is to study
master equations\ on arbitrary graphs as discrete objects. The second one is
to construct `continuous versions' of master equations of type (\ref{1}) over
arbitrary graphs. The first approach is based almost exclusively on knowing
the spectra of the matrices $\left[  w_{I,J}\right]  $. Since these matrices
are typically very large, this approach is useful only in computer simulations.

The second approach is based on the fact that matrices $\left[  w_{I,J}%
\right]  $ can be realized as $p$-adic operators acting on suitable functions
spaces. We discuss this construction in the Section \ref{Section_4}. The
$p$-adic continuous versions of the master equations of form (\ref{1}) are
equations of type%
\begin{equation}
\frac{du\left(  x,t\right)  }{dt}=\int\limits_{\mathcal{K}}\left\{  j(x\mid
y)u(y,t)-j(y\mid x)u(x,t)\right\}  dy, \label{3}%
\end{equation}
where $\mathcal{K}$ is an open compact subset of $\mathbb{Q}_{p}$,
$x\in\mathcal{K}$, $t\geq0$, $u\left(  x,t\right)  $ is the population (or
density) of state $x$ at time $t$, and the functions $j(x\mid y)$,
respectively $j(y\mid x)$, give the transition density rate (per unit of time)
from $y$ to $x$, respectively\ from $x$ to $y$, and $dy$ is the normalized
Haar measure of the group $(\mathbb{Q}_{p},+)$.

For the sake of simplicity, we take $\mathcal{K}=%
{\textstyle\bigsqcup\nolimits_{a\in\mathcal{G}}}
\left(  a+p\mathbb{Z}_{p}\right)  $, where $\mathcal{G}\subset\left\{
0,1,\ldots,p-1\right\}  $. Each ball $a+p\mathbb{Z}_{p}$ corresponds to a
basin, and the states belonging to $a+p\mathbb{Z}_{p}$ are organized in an
infinite tree with root $a$. Under mild hypotheses, \ all the equations of the
form (\ref{3}) are $p$-adic diffusion equations, which means that there is a
Markov process with state space $\mathcal{K}$ attached to each of these
equations, see Theorem \ref{Theorem1}. The dynamics complex system attached to
(\ref{3}) can be described as random transitions in the state space
$\mathcal{K}$. The fact that the dynamics of certain complex systems can be
modeled using $p$-adic reaction diffusion equations in $\mathbb{Q}_{p}$ was
presented in \cite{Av-4}, see also \cite{Av-5}, \cite{KKZuniga},
\cite{Zuniga-LNM-2016}, \cite{Zuniga-Nonlinear}, \cite{Zuniga-Galindo-JMAA},
\ and \ the references therein. Indeed, an equation of type (\ref{3}%
),\textit{\ but with }$\mathbb{Q}_{p}$\textit{\ instead of }$\mathcal{K}$ was
proposed in \cite{Av-4}. The dynamics of this type of equations is radically
different to the dynamics of (\ref{3}).

The function $u\left(  x,t\right)  :\mathcal{K}\times\left[  0,\infty\right)
\rightarrow\mathbb{R}$ is completely determined \ by the restrictions
$u_{a}\left(  x,t\right)  =u\left(  x,t\right)  $ for $x\in a+p\mathbb{Z}_{p}%
$, $t\geq0$, i.e. $u\left(  x,t\right)  =\sum_{a\in\mathcal{G}}u_{a}\left(
x,t\right)  $. Similarly, $j(x\mid y)=\sum_{a\in\mathcal{G}}\sum
_{b\in\mathcal{G}}j_{a,b}(x\mid y)$, where $j_{a,b}(x\mid y):\left(
a+p\mathbb{Z}_{p}\right)  \times\left(  b+p\mathbb{Z}_{p}\right)
\rightarrow\mathbb{R}$. It is more convenient to perform the change of
variables $x\rightarrow a+x$, and work with the functions $\widetilde{u}%
_{a}\left(  x,t\right)  =u_{a}\left(  a+x,t\right)  $ with $x\in
p\mathbb{Z}_{p}$, and $\widetilde{j}_{a,b}(x\mid y)=j_{a,b}(a+x\mid b+y)$ with
$x,y\in p\mathbb{Z}_{p}$. With this notation equation (\ref{3}) can be written
as%
\begin{equation}
\frac{d\widetilde{u}_{a}(x,t)}{dt}=\sum\limits_{b\in\mathcal{G}}\text{
}\left\{  \text{ }\int\limits_{p\mathbb{Z}_{p}}\left\{  \widetilde{u}%
_{b}(y,t)-\widetilde{u}_{a}(x,t)\right\}  \widetilde{j}_{a,b}(x\mid
y)dy\right\}  -\widetilde{u}_{a}(x,t)\widetilde{S}_{a}(x)\text{ for }%
a\in\mathcal{G}\text{.} \label{4A}%
\end{equation}
This systems of integro-differential equations is the master equation of a
$p$\textit{-adic transition network (or simply a network). }Notice that we can
attach to equation (\ref{4A}) an oriented graph whose vertices are $\left\{
a_{1},,\ldots,a_{\#\mathcal{G}}\right\}  $. If $\widetilde{j}_{a,b}(x\mid
y)\neq0$, there is a directed edge from $b$ to $a$. In this article, a network
is a model of a complex system consisting of three components: an energy
landscape, a Markov process on the energy landscape, and a master equation.
The energy landscape consists of a finite number of \textit{basins}
$a+p\mathbb{Z}_{p}$, $a\in\mathcal{G}$. The transitions between the basins are
determined by \textit{the transition density matrix of the network} as
$\left[  \widetilde{j}_{a,b}(x\mid y)\right]  _{a,b\in\mathcal{G}}$, and the
\textit{sink function} as $\left[  \widetilde{S}_{a}(x)\right]  _{a\in
\mathcal{G}}$.

The solution of the Cauchy problem attached to (\ref{4A}) is the key tool to
understand the qualitative behavior of the $p$-adic transition networks. To
obtain precise description of the solutions $u(x,t)$, it is necessary to
impose restrictions to the transition functions $j_{a,b}(x\mid y)$,
$j_{b,a}(y\mid x)$. We assume that the restriction of the functions $j(x\mid
y)$, $j(y\mid x)$ to the set $\left(  a+p\mathbb{Z}_{p}\right)  \times\left(
a+p\mathbb{Z}_{p}\right)  $ have the form $j(x\mid y)=w_{a}(\left\vert
x-y\right\vert _{p})$ and \ $j(y\mid x)=v_{a}(\left\vert x-y\right\vert _{p}%
)$, and that the restriction of the functions $j(x\mid y)$, $j(y\mid x)$ to
the set $\left(  a+p\mathbb{Z}_{p}\right)  \times\left(  b+p\mathbb{Z}%
_{p}\right)  $, with $a\neq b$, have the form $j_{a,b}(x\mid y)=\lambda
_{a,b}\Omega\left(  p\left\vert x-a\right\vert _{p}\right)  \Omega\left(
p\left\vert y-b\right\vert _{p}\right)  $, $j_{b,a}(y\mid x)=\mu_{b,a}%
\Omega\left(  p\left\vert x-a\right\vert _{p}\right)  \Omega\left(
p\left\vert y-b\right\vert _{p}\right)  $, where $\Omega\left(  p\left\vert
x-a\right\vert _{p}\right)  $ denotes the characteristic function of the ball
$a+p\mathbb{Z}_{p}$. This choice means that we assume that the transition
rates between different basins are constant functions, while the jumping
process inside of the ball $a+p\mathbb{Z}_{p}$ is controlled by an operator of
the form
\[
\boldsymbol{W}_{a}\varphi\left(  x\right)  =\int\limits_{p\mathbb{Z}_{p}%
}\left\{  \varphi(y,t)-\varphi(x,t)\right\}  w_{a}(\left\vert x-y\right\vert
_{p})dy,
\]
where $\varphi$ is a function supported in the ball $p\mathbb{Z}_{p}$. These
operators are pseudo-differential and their spectra can be computed
explicitly:
\[
\boldsymbol{W}_{a}\Psi_{rmj}\left(  x\right)  =\left(  \widehat{w_{a}}%
(p^{1-r})-\gamma_{a}\right)  \Psi_{rmj}\left(  x\right)  ,
\]
where the radial function $\widehat{w_{a}}$ is the Fourier transform of
$w_{a}$, $\gamma_{a}$ is a positive constant \ and the functions $\left\{
\Psi_{rmj}\right\}  $ are an orthonormal basis of $L^{2}(p\mathbb{Z}_{p})$.

The solution of the Cauchy problem is%
\begin{equation}
u_{a}\left(  x,t\right)  =p^{\frac{1}{2}}C_{0}^{\left(  a\right)  }\left(
t\right)  \Omega\left(  p\left\vert x-a\right\vert _{p}\right)  +%
{\textstyle\sum\limits_{rmj}}
e^{\left(  \widehat{w}_{a}\left(  p^{1-r}\right)  -p^{-1}\overline{\mu}%
_{a}\right)  t}\operatorname{Re}\left(  C_{rmj}^{\left(  a\right)  }\left(
0\right)  \Psi_{rmj}\left(  x\right)  \right)  , \label{Solution1}%
\end{equation}
$a\in\mathcal{G}$, where $\left[  C_{0}^{\left(  a\right)  }\left(  t\right)
\right]  _{a\in\mathcal{G}}=e^{\Lambda t}\left[  C_{0}^{\left(  a\right)
}\left(  0\right)  \right]  _{a\in\mathcal{G}}$. The constants $C_{0}^{\left(
a\right)  }\left(  0\right)  $ are real, while the constants $C_{rmj}^{\left(
a\right)  }\left(  0\right)  $ are complex . Here $\Lambda$ is a matrix
determined by the $\lambda_{a,b}$ and the $\mu_{b,a}$, and $\widehat{w}%
_{a}\left(  p^{1-r}\right)  -p^{-1}\overline{\mu}_{a}\leq0$ for any
$a\in\mathcal{G}$.

The main results of this work are the theorems \ref{Theorem2}, \ref{Theorem3},
and \ref{Theorem4}. Here we provide an overview of the meaning of these
results. If the Markov process attached to the network is conservative, \ then
for any continuous initial datum $u^{0}(x)\in\left[  0,1\right]  $, the
solution of the Cauchy problem $u(x,t)\in\left[  0,1\right]  $ can be expanded
as (\ref{Solution1}) for $t\in\left[  0,\infty\right)  $, cf. Theorem
\ref{Theorem2}. The matrix $\Lambda$ satisfies $\Lambda\boldsymbol{1}%
=\boldsymbol{0}$ or $\Lambda\boldsymbol{1}\neq\boldsymbol{0}$, cf. Theorem
\ref{Theorem3}. In the first case $\left\{  e^{t\Lambda}\right\}  _{t\geq0}$
is a Markov semigroup of matrices and $\Lambda$ is the infinitesimal generator
of a time-continuous Markov chain with states $\#\mathcal{G}$. In the second
case, $\left\{  e^{t\Lambda}\right\}  _{t\geq0}$ is a substochastic semigroup
of matrices, and by introducing a terminal state $\Delta$, $\Lambda$ becomes
the infinitesimal generator of a time-continuous Markov chain with states
$1+\#\mathcal{G}$. The long-term behavior of the network is controlled by the
Markov chain attached to $\Lambda$, see Theorem \ref{Theorem4}.

If the Markov process attached to the network is not conservative, \ the
network may have absorbing states. Given an continuous initial datum
$u^{0}(x)\in\left[  0,1\right]  $, we introduce an absorbing time $\tau
=\tau\left(  u^{0}\right)  $. Then solution of the Cauchy problem
$u(x,t)\in\left[  0,1\right]  $ can be expanded as (\ref{Solution1}) for
$t\in\left[  0,\tau\right]  $, cf. Theorem \ref{Theorem2}. If $\tau=\infty$,
the network does not have absorbing states and $\lim_{t\rightarrow\infty
}u(x,t)=0$, which means that the Markov process dies at infinity, cf. Theorem
\ref{Theorem4}. If $\tau<\infty$, at the time $t=\tau\left(  u^{\left(
0\right)  }\right)  $ the network reaches an absorbing state. We call the
terms $e^{\left(  \widehat{w}_{a}\left(  p^{1-r}\right)  -p^{-1}\overline{\mu
}_{a}\right)  t}\operatorname{Re}\left(  C_{rmj}^{\left(  a\right)  }\left(
0\right)  \Psi_{rmj}\left(  x\right)  \right)  $ \textit{the} \textit{fast
transition modes} because they give rise to fast transitions that drive the
network to an absorbing state (an open compact subset) in a finite time, cf.
Theorem \ref{Theorem2}. The transitions modes are produced by the ultrametric
structure of the basins.

On the other hand, each term $e^{\left(  \widehat{w}_{a}\left(  p^{1-r}%
\right)  -p^{-1}\overline{\mu}_{a}\right)  t}\operatorname{Re}\left(
C_{rmj}^{\left(  a\right)  }\left(  0\right)  \Psi_{rmj}\left(  x\right)
\right)  $ corresponds to the response of a second order type system, which
has a time constant%
\[
\sigma_{a,r}=\frac{4}{-\left(  \widehat{w}_{a}\left(  p^{1-r}\right)
-p^{-1}\overline{\mu}_{a}\right)  }\text{. }%
\]
We argue that the dynamics of each basin is determined by the interaction of a
countable number of second order type systems, each of them with particular
time constant $\sigma_{a,r}$, $a\in\mathcal{G}$, $r\leq1$. The dynamics
between basins is controlled by a Markov chain determined by the matrix
$\Lambda$. The network has a countable number of time scales: $\left\{
\sigma_{a,r}\right\}  $, and $\frac{-4}{\mu}$, where $\mu$ runs through the
eigenvalues of $\Lambda$.

As an application, in Section \ref{Section_7}, we present a toy model \ of
protein folding using a binary network with states $U$ (unfolded state), $N$
(native state). We give an initial concentration $v^{0}(x)$ so that the
average concentration at the time zero in the basin $U$ is non-zero while the
average concentration, at time zero, in the basin $N$ is approximated zero. We
determine explicitly the absorbing \ time $\tau<\infty$ such that the network
gets trapped in an open compact subset of the basin $N$.

Simulation techniques based on Markov state models and transition network
models are widely used to analyze systems evolving in time, for instance, in
molecular dynamics, and particularly in protein folding, see \cite{Bowman et
al}, \cite{Hongyou et al}, \cite{Husic et al}, \cite{Noe et al}, \cite{Prinz
et al}, and the references therein. Our results fit perfectly in this
framework. We start with `directed kinetic transition network model'\ based on
a master equation as in \cite{Zhou et al}, but we construct a\ solvable model
instead of a simulation model. According to Bowman and Pande \cite{Bowman and
Pande}, protein native states are kinetic hubs that can be reached quickly
from any other state. The existence of fast transition modes implies that
certain states on an ultrametric network work as kinetic hubs. Here we mention
again that the existence of kinetic hubs is a consequence of the hierarchical
structure of the energy landscape, and that this type of conclusion cannot be
obtained, as far as we know, by using standard techniques of Markov state
models and transition network models. Furthermore, a native hub allows rapid
folding but proteins can still get stuck in a web of nonnative states,
\cite{Bowman and Pande}. The long-term behavior of an ultrametric network
allows explaining this phenomenon.

Avetisov, Kozyrev et al and Bikulov, Zubarev developed $p$-adic models of
relaxation of complex systems and applied them to protein dynamics
\cite{Av-4}-\cite{Av-5}, \cite{Bikulov et al}, see also \cite{KKZuniga},
\cite{Kozyrev SV} and the references therein. The final purpose of these works
is to provide a $p$-adic analytical description of experiments on the kinetics
of CO binding to myoglobin, which were carried out by the group of
Frauenfelder more than 30 years ago. The $p$-adic description of protein
dynamics fits also in the experiments of J. Friedrich's group on the spectral
diffusion in proteins. Therefore, ultrametric diffusion fits the protein
dynamics well over an extremely large range of time scales from microseconds
to weeks, see e.g. \cite{Avetisov et al} and the references therein. In this
article we consider the protein dynamics from a different perspective,
furthermore, the reaction-diffusion equations used in the above mentioned
works involve the Vladimirov operator, while the reaction-diffusion equations
used here involve a transition kernels, see \cite{To-Zuniga-1}.

The study of $p$-adic heat equations and the attached Markov processes is a
relevant mathematical matter \cite{Bendikov}, \cite{Bradley},
\cite{Dra-Kh-K-V}, \cite{Kozyrev SV}, \cite{To-Zuniga-1}, \cite{To-Zuniga-2},
\cite{V-V-Z}, \cite{Zuniga-LNM-2016}, \cite{Zuniga-Nonlinear},
\cite{Zuniga-Galindo-JMAA}, and the references therein. Recently new models in
geophysics and epidemiology has been formulated using $p$-adic diffusion and
$p$-adic master equations, see \cite{Khrennikov et al 1}-\cite{Khrennikov et
al 3}. The $p$-adic model of ultrametric diffusion with sink functions was
earlier introduced and studied in the context of Eigen's paradox in prebiotic
evolution, which is an analog of the Levinthal paradox in protein folding
\cite{Av-Zhu}. For an in-depth discussion of Eigen's paradox the reader may
consult \cite{Zuniga-Galindo-PNAS} and the references therein.

The study of $p$-adic heat equations on $p$-adic manifolds is an open research
area. The heat equations considered here are heat equations on $p$-adic
compact manifolds. Indeed, a $p$-adic compact manifold is a disjoint union of
a finite numbers of open compact balls, see e.g. \cite[Lemma 7.5.1]{Igusa}.

\section{\label{Section_1}Basic facts on $p$-adic analysis}

In this section we collect some basic results about $p$-adic analysis that
will be used in the article. For an in-depth review of the $p$-adic analysis
the reader may consult \cite{Alberio et al}, \cite{Taibleson}, \cite{V-V-Z}.

\subsection{The field of $p$-adic numbers}

Along this article $p$ will denote a prime number. The field of $p-$adic
numbers $%
\mathbb{Q}
_{p}$ is defined as the completion of the field of rational numbers
$\mathbb{Q}$ with respect to the $p-$adic norm $|\cdot|_{p}$, which is defined
as
\[
\left\vert x\right\vert _{p}=\left\{
\begin{array}
[c]{lll}%
0 & \text{if} & x=0\\
&  & \\
p^{-\gamma} & \text{if} & x=p^{\gamma}\frac{a}{b}\text{,}%
\end{array}
\right.
\]
where $a$ and $b$ are integers coprime with $p$. The integer $\gamma:=ord(x)
$, with $ord(0):=+\infty$, is called the\textit{\ }$p-$\textit{adic order of}
$x$.

Any $p-$adic number $x\neq0$ has a unique expansion of the form
\[
x=p^{ord(x)}\sum_{j=0}^{\infty}x_{j}p^{j},
\]
where $x_{j}\in\{0,\dots,p-1\}$ and $x_{0}\neq0$. By using this expansion, we
define \textit{the fractional part of }$x\in\mathbb{Q}_{p}$, denoted as
$\{x\}_{p}$, to be the rational number
\[
\left\{  x\right\}  _{p}=\left\{
\begin{array}
[c]{lll}%
0 & \text{if} & x=0\text{ or }ord(x)\geq0\\
&  & \\
p^{ord(x)}\sum_{j=0}^{-ord_{p}(x)-1}x_{j}p^{j} & \text{if} & ord(x)<0.
\end{array}
\right.
\]
In addition, any non-zero $p-$adic number can be represented uniquely as
$x=p^{ord(x)}ac\left(  x\right)  $ where $ac\left(  x\right)  =\sum
_{j=0}^{\infty}x_{j}p^{j}$, $x_{0}\neq0$, is called the \textit{angular
component} of $x$. Notice that $\left\vert ac\left(  x\right)  \right\vert
_{p}=1$.

For $r\in\mathbb{Z}$, denote by $B_{r}(a)=\{x\in%
\mathbb{Q}
_{p};\left\vert x-a\right\vert _{p}\leq p^{r}\}$ \textit{the ball of radius
}$p^{r}$ \textit{with center at} $a\in%
\mathbb{Q}
_{p}$, and take $B_{r}(0):=B_{r}$. The ball $B_{0}$ equals $\mathbb{Z}_{p}$,
\textit{the ring of }$p-$\textit{adic integers of }$%
\mathbb{Q}
_{p}$. We also denote by $S_{r}(a)=\{x\in\mathbb{Q}_{p};|x-a|_{p}=p^{r}\}$
\textit{the sphere of radius }$p^{r}$ \textit{with center at} $a\in%
\mathbb{Q}
_{p}$, and take $S_{r}(0):=S_{r}$. We notice that $S_{0}=\mathbb{Z}%
_{p}^{\times}$ (the group of units of $\mathbb{Z}_{p}$). The balls and spheres
are both open and closed subsets in $%
\mathbb{Q}
_{p}$. In addition, two balls in $%
\mathbb{Q}
_{p}$ are either disjoint or one is contained in the other.

The metric space $\left(
\mathbb{Q}
_{p},\left\vert \cdot\right\vert _{p}\right)  $ is a complete ultrametric
space. As a topological space $\left(
\mathbb{Q}
_{p},|\cdot|_{p}\right)  $ is totally disconnected, i.e. the only connected
subsets of $%
\mathbb{Q}
_{p}$ are the empty set and the points. In addition, $\mathbb{Q}_{p}$\ is
homeomorphic to a Cantor-like subset of the real line, see e.g. \cite{Alberio
et al}, \cite{V-V-Z}. A subset of $\mathbb{Q}_{p}$ is compact if and only if
it is closed and bounded in $\mathbb{Q}_{p}$, see e.g. \cite[Section
1.3]{V-V-Z}, or \cite[Section 1.8]{Alberio et al}. The balls and spheres are
compact subsets. Thus $\left(
\mathbb{Q}
_{p},|\cdot|_{p}\right)  $ is a locally compact topological space.

\begin{notation}
We will use $\Omega\left(  p^{-r}|x-a|_{p}\right)  $ to denote the
characteristic function of the ball $B_{r}(a)=a+p^{-r}\mathbb{Z}_{p}$.
\end{notation}

\subsection{Some function spaces}

A complex-valued function $\varphi$ defined on $%
\mathbb{Q}
_{p}$ is \textit{called locally constant} if for any $x\in%
\mathbb{Q}
_{p}$ there exist an integer $l(x)\in\mathbb{Z}$ such that
\begin{equation}
\varphi(x+x^{\prime})=\varphi(x)\text{ for any }x^{\prime}\in B_{l(x)}.
\label{local_constancy_parameter}%
\end{equation}
\ A function $\varphi:%
\mathbb{Q}
_{p}\rightarrow\mathbb{C}$ is called a \textit{Bruhat-Schwartz function (or a
test function)} if it is locally constant with compact support. In this case,
we can take $l=l(\varphi)$ in (\ref{local_constancy_parameter}) independent of
$x$, the largest of such integers is called \textit{the parameter of local
constancy} of $\varphi$. The $\mathbb{C}$-vector space of Bruhat-Schwartz
functions is denoted by $\mathcal{D}(%
\mathbb{Q}
_{p})$. We will denote by $\mathcal{D}_{\mathbb{R}}(%
\mathbb{Q}
_{p})$, the $\mathbb{R}$-vector space of test functions.

Since $(\mathbb{Q}_{p},+)$ is a locally compact topological group, there
exists a Borel measure $dx$, called the Haar measure of $(\mathbb{Q}_{p},+)$,
unique up to multiplication by a positive constant, such that $\int_{U}dx>0 $
for every non-empty Borel open set $U\subset\mathbb{Q}_{p}$, and satisfying
$\int_{E+z}dx=\int_{E}dx$ for every Borel set $E\subset\mathbb{Q}_{p}$, see
e.g. \cite[Chapter XI]{Halmos}. If we normalize this measure by the condition
$\int_{\mathbb{Z}_{p}}dx=1$, then $dx$ is unique. From now on we denote by
$dx$ the normalized Haar measure of $(\mathbb{Q}_{p},+)$.

Given an open subset $U\subset%
\mathbb{Q}
_{p}$, we denote by $L^{\rho}\left(  U\right)  $, with $\rho\in\left[
1,\infty\right)  $, the $%
\mathbb{C}
-$vector space of all the complex valued functions $g$ defined on
$U$\ satisfying
\[
\left\Vert g\right\Vert _{\rho}=\left\{
{\displaystyle\int\limits_{U}}
\left\vert g\left(  x\right)  \right\vert ^{\rho}dx\right\}  ^{\frac{1}{\rho}%
}<\infty.
\]
The corresponding $\mathbb{R}$-vector spaces are denoted as $L_{\mathbb{R}%
}^{\rho}\left(  U\right)  $.

Let $U$ be an open subset of $%
\mathbb{Q}
_{p}$, we denote by $\mathcal{D}(U)$ the $\mathbb{C}$-vector space of all test
functions with support in $U$. Then $\mathcal{D}(U)$ is dense in $L^{\rho
}\left(  U\right)  $, with $\rho\in\left[  1,\infty\right)  $, see e.g.
\cite[Proposition 4.3.3]{Alberio et al}.

\subsection{Fourier transform}

Set $\chi_{p}(y)=\exp(2\pi i\{y\}_{p})$ for $y\in%
\mathbb{Q}
_{p}$. The map $\chi_{p}$ is an additive character on $%
\mathbb{Q}
_{p}$, i.e. a continuous map from $\left(
\mathbb{Q}
_{p},+\right)  $ into $S$ (the unit circle considered as multiplicative group)
satisfying $\chi_{p}(x_{0}+x_{1})=\chi_{p}(x_{0})\chi_{p}(x_{1})$,
$x_{0},x_{1}\in%
\mathbb{Q}
_{p}$. The additive characters of $%
\mathbb{Q}
_{p}$ form an Abelian group which is isomorphic to $\left(
\mathbb{Q}
_{p},+\right)  $, the isomorphism is given by $\xi\rightarrow\chi_{p}(\xi x)$,
see e.g. \cite[Section 2.3]{Alberio et al}.

If $f\in L^{1}\left(
\mathbb{Q}
_{p}\right)  $ its Fourier transform is defined by
\[
(\mathcal{F}f)(\xi)=%
{\displaystyle\int\limits_{\mathbb{Q} _{p}}}
\chi_{p}(\xi x)f(x)dx,\quad\text{for }\xi\in%
\mathbb{Q}
_{p}.
\]
We will also use the notation $\mathcal{F}_{x\rightarrow\xi}f$ and
$\widehat{f}$\ for the Fourier transform of $f$. The Fourier transform is a
linear isomorphism (algebraic and topological) from $\mathcal{D}(%
\mathbb{Q}
_{p})$ onto itself satisfying
\begin{equation}
(\mathcal{F}(\mathcal{F}f))(\xi)=f(-\xi), \label{FF(f)}%
\end{equation}
for every $f\in\mathcal{D}(%
\mathbb{Q}
_{p}),$ see e.g. \cite[Section 4.8]{Alberio et al}. If $f\in L^{2},$ its
Fourier transform is defined as
\[
(\mathcal{F}f)(\xi)=\lim_{k\rightarrow\infty}%
{\displaystyle\int\limits_{|x|_{p}\leq p^{k}}}
\chi_{p}(\xi x)f(x)dx,\quad\text{for }\xi\in%
\mathbb{Q}
_{p},
\]
where the limit is taken in $L^{2}(%
\mathbb{Q}
_{p})$. We recall that the Fourier transform is unitary on $L^{2}(%
\mathbb{Q}
_{p}),$ i.e. $||f||_{2}=||\mathcal{F}f||_{2}$ for $f\in L^{2}(%
\mathbb{Q}
_{p})$ and that (\ref{FF(f)}) is also valid in $L^{2}(%
\mathbb{Q}
_{p})$, see e.g. \cite[Chapter $III$, Section 2]{Taibleson}.

\section{\label{Section_2}A class of $p$-adic heat equations}

Let $\mathcal{K}\subset\mathbb{Z}_{p}$ be a compact open subset. Then,
$\mathcal{K}$ is a finite union of balls contained in $\mathbb{Z}_{p}$. We
assume that%
\[
\mathcal{K}=\bigsqcup\limits_{a\in\mathcal{G}}\left(  a+p\mathbb{Z}%
_{p}\right)  ,
\]
where $\mathcal{G}\subset\left\{  0,1,\ldots,p-1\right\}  $ is a fixed set.
All our results can be easily extended to the general case $%
{\displaystyle\bigsqcup\nolimits_{a\in\mathcal{G}}}
\left(  a+p^{l_{a}}\mathbb{Z}_{p}\right)  $,\ but taking $l_{a}=1$ allows us
to get simpler formulas without sacrificing generality.

We set
\[
\mathcal{C}(\mathcal{K},\mathbb{R}):=\left\{  f:\mathcal{K}\rightarrow
\mathbb{R};f\text{ continuous}\right\}  .
\]
Then $\left(  \mathcal{C}(\mathcal{K},\mathbb{R}),\left\Vert \cdot\right\Vert
_{\infty}\right)  $ endowed with the norm $\left\Vert f\right\Vert _{\infty
}=\max_{x\in\mathcal{K}}\left\vert f\left(  x\right)  \right\vert $ is a
Banach space. Notice that
\begin{equation}
\mathcal{C}(\mathcal{K},\mathbb{R})=\bigoplus\limits_{a\in\mathcal{G}%
}\mathcal{C}\left(  a+p\mathbb{Z}_{p},\mathbb{R}\right)  ,\label{Eq_1A}%
\end{equation}
and similarly
\begin{equation}
\mathcal{C}(\mathcal{K\times K},\mathbb{R})=\bigoplus\limits_{a\in\mathcal{G}%
}\bigoplus\limits_{b\in\mathcal{G}}\mathcal{C}\left(  \left(  a+p\mathbb{Z}%
_{p}\right)  \times\left(  b+p\mathbb{Z}_{p}\right)  ,\mathbb{R}\right)
.\label{Eq_1B}%
\end{equation}
We set $\mathbb{R}_{+}:=\left\{  t\in\mathbb{R};t\geq0\right\}  $. We now pick
two functions $j(x\mid y)$, $j\left(  y\mid x\right)  \in\mathcal{C}%
(\mathcal{K\times K},\mathbb{R}_{+})$. The function $j(y\mid x)$ gives the
transition density rate (per unit of time) from $x$ to $y$, i.e.
\[
\mathbb{P}(x,B)=%
{\textstyle\int\limits_{B}}
j(y\mid x)dy
\]
is the transition probability from $x$ into $B$ (per unit of time), where $B$
is a Borel subset of $\mathcal{K}$. In general the functions $j(x\mid y)$,
$j\left(  y\mid x\right)  $ are different.

We now define the operator%
\begin{equation}
\boldsymbol{J}f(x)=\int\limits_{\mathcal{K}}\left\{  j(x\mid y)f(y)-j(y\mid
x)f(x)\right\}  dy\text{, }f\in\mathcal{C}(\mathcal{K},\mathbb{R})\text{.}
\label{Operator_J}%
\end{equation}

\begin{lemma}
The mapping $f\rightarrow\boldsymbol{J}f$ gives rise to a well-defined,
linear, bounded operator from $\mathcal{C}(\mathcal{K},\mathbb{R})$\ into itself.
\end{lemma}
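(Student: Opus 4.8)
The plan is to check the three assertions — well-definedness, linearity, boundedness — one at a time; all of them rest on two facts: that $\mathcal{K}\subset\mathbb{Z}_{p}$ is compact, so that $\mu:=\int_{\mathcal{K}}dy\leq\int_{\mathbb{Z}_{p}}dy=1<\infty$, and that the kernels $j(x\mid y)$ and $j(y\mid x)$, being continuous on the compact set $\mathcal{K}\times\mathcal{K}$, are bounded there (say $|j(x\mid y)|\leq M$ and $|j(y\mid x)|\leq M$ for $x,y\in\mathcal{K}$, with $M<\infty$) and uniformly continuous there. First, for each fixed $x\in\mathcal{K}$ the integrand $y\mapsto j(x\mid y)f(y)-j(y\mid x)f(x)$ is continuous on the compact set $\mathcal{K}$, hence bounded and integrable, so $\boldsymbol{J}f(x)$ is a well-defined real number; linearity of $f\mapsto\boldsymbol{J}f$ is then immediate from linearity of the integral.

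Next, boundedness: for every $x\in\mathcal{K}$ the triangle inequality inside the integral gives
\[
\left|\boldsymbol{J}f(x)\right|\leq\int_{\mathcal{K}}\left|j(x\mid y)\right|\left|f(y)\right|dy+\left|f(x)\right|\int_{\mathcal{K}}\left|j(y\mid x)\right|dy\leq M\mu\left\Vert f\right\Vert_{\infty}+M\mu\left\Vert f\right\Vert_{\infty},
\]
so that $\left\Vert\boldsymbol{J}f\right\Vert_{\infty}\leq 2M\mu\left\Vert f\right\Vert_{\infty}$; once we know $\boldsymbol{J}f\in\mathcal{C}(\mathcal{K},\mathbb{R})$, this is exactly the statement that $\boldsymbol{J}$ is a bounded operator, with operator norm at most $2M\mu$.

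The only step that requires genuine care is showing that $\boldsymbol{J}f$ is continuous, i.e. that $\boldsymbol{J}$ maps $\mathcal{C}(\mathcal{K},\mathbb{R})$ into itself. Here I would write $\boldsymbol{J}f=\boldsymbol{J}_{1}f-\boldsymbol{J}_{2}f$, where $\boldsymbol{J}_{1}f(x)=\int_{\mathcal{K}}j(x\mid y)f(y)dy$ and $\boldsymbol{J}_{2}f(x)=f(x)\,\Phi(x)$ with $\Phi(x):=\int_{\mathcal{K}}j(y\mid x)dy$. Given $\varepsilon>0$, uniform continuity of $j(\cdot\mid\cdot)$ on $\mathcal{K}\times\mathcal{K}$ yields $\delta>0$ such that $\left|x-x'\right|_{p}<\delta$ forces $\left|j(x\mid y)-j(x'\mid y)\right|<\varepsilon$ for all $y\in\mathcal{K}$; integrating against $f$ gives $\left|\boldsymbol{J}_{1}f(x)-\boldsymbol{J}_{1}f(x')\right|\leq\varepsilon\mu\left\Vert f\right\Vert_{\infty}$, so $\boldsymbol{J}_{1}f$ is (uniformly) continuous. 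The same argument applied to the kernel $j(y\mid x)$ shows $\Phi\in\mathcal{C}(\mathcal{K},\mathbb{R})$, and since $\mathcal{C}(\mathcal{K},\mathbb{R})$ is closed under pointwise products, $\boldsymbol{J}_{2}f=f\Phi$ is continuous as well. Hence $\boldsymbol{J}f\in\mathcal{C}(\mathcal{K},\mathbb{R})$, which together with the bound above completes the proof. Alternatively, one can avoid uniform continuity altogether: approximate $j$ uniformly on the compact set $\mathcal{K}\times\mathcal{K}$ by locally constant kernels — for which $\boldsymbol{J}_{1}f$ and $\Phi$ are manifestly locally constant — and pass to the uniform limit using the estimate of the previous paragraph.
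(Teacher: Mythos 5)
Your proposal is correct and rests on the same core estimate as the paper, namely dominating the integrand by a constant multiple of $\left\Vert f\right\Vert _{\infty}1_{\mathcal{K}}(y)$ to get both the operator norm bound and the continuity of $\boldsymbol{J}f$. The only cosmetic difference is that the paper invokes the dominated convergence theorem for the continuity step, whereas you use uniform continuity of the kernels on the compact set $\mathcal{K}\times\mathcal{K}$; both are valid and essentially interchangeable here.
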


Indeed, by using that
\begin{align*}
\left\vert j(x\mid y)f(y)-j(y\mid x)f(x)\right\vert  &  \leq\left(
\max_{x,y\in\mathcal{K}}j(x\mid y)+\max_{x,y\in\mathcal{K}}j(y\mid x)\right)
\left(  \max_{x\in\mathcal{K}}\left\vert f(x)\right\vert \right)
1_{\mathcal{K}}\left(  y\right) \\
&  =A\left\Vert f\right\Vert _{\infty}1_{\mathcal{K}}\left(  y\right)  ,
\end{align*}
it follows from the dominated convergence theorem that $\boldsymbol{J}$ is a
bounded operator from $\mathcal{C}(\mathcal{K},\mathbb{R})$ into itself.

We now assume that
\begin{equation}
j(x\mid y)\leq j(y\mid x)\text{ for any }x,y\in\mathcal{K}\text{.}
\tag{Hypothesis 1}%
\end{equation}
Under the Hypothesis 1, the operator $\boldsymbol{J}$ satisfies the positive
maximum principle, i.e. \ if $h\in\mathcal{C}(\mathcal{K},\mathbb{R})$ and
$\max_{x\in\mathcal{K}}h\left(  x\right)  =h(x_{0})\geq0$, then $\left(
\boldsymbol{J}h\right)  (x_{0})\leq0$. Now, for any fixed $\lambda>\left\Vert
\boldsymbol{J}\right\Vert $, the operator $\frac{\boldsymbol{1}}%
{\boldsymbol{1}-\frac{1}{\lambda}\boldsymbol{J}}=1+\lambda\boldsymbol{J+}%
\lambda^{2}\boldsymbol{J}^{2}+\cdots+\lambda^{n}\boldsymbol{J}^{n}+\cdots$ is
linear and bounded, consequently $\mathrm{rank}\left(  \boldsymbol{1}-\frac
{1}{\lambda}\boldsymbol{J}\right)  =\mathcal{C}(\mathcal{K},\mathbb{R})$. Now,
by using the Hille-Yoshida-Ray theorem, see e.g. \cite[Theorem 2.18]{Taira},
$\boldsymbol{J}$ generates a strongly continuous, positive, contraction
semigroup $\left\{  e^{t\boldsymbol{J}}\right\}  _{t\geq0}$ on $\mathcal{C}%
(\mathcal{K},\mathbb{R})$, which is a Feller semigroup. For the sake of
completeness we recall de definition of Feller semigroup here.

\begin{definition}
A family of bounded linear operators $\boldsymbol{P}_{t}:\mathcal{C}\left(
\mathcal{K},\mathbb{R}\right)  \rightarrow\mathcal{C}\left(  \mathcal{K}%
,\mathbb{R}\right)  $, $t\geq0$, is called a Feller semigroup if

\noindent(i) $\boldsymbol{P}_{s+t}=\boldsymbol{P}_{s}\boldsymbol{P}_{t}$ and
$\boldsymbol{P}_{0}=I$;

\noindent(ii) $\lim_{t\rightarrow0}||\boldsymbol{P}_{t}h-h||_{\infty}=0$ for
any $h\in\mathcal{C}\left(  \mathcal{K},\mathbb{R}\right)  $;

\noindent(iii) $0\leq\boldsymbol{P}_{t}h\leq1$ if $0\leq h\leq1$, with
$h\in\mathcal{C}\left(  \mathcal{K},\mathbb{R}\right)  $ and for any $t\geq0$.
\end{definition}

The Hille-Yoshida-Ray theorem characterizes the Feller semigroups. Now by
using the correspondence between Feller semigroups and transition functions,
there exists a uniformly stochastically continuous $C_{0}$-transition function
$p_{t}\left(  x,dy\right)  $ satisfying condition $(L)$, such that
\[
e^{t\boldsymbol{J}}u_{0}\left(  x\right)  =\int\limits_{\mathcal{K}}%
u_{0}(y)p_{t}\left(  x,dy\right)  \text{ for }u_{0}\in\mathcal{C}\left(
\mathcal{K}_{N},\mathbb{R}\right)  \text{,}
\]
see e.g. \cite[Theorem 2.15]{Taira}. Now, by using the correspondence between
transition functions and Markov processes, there exists a strong Markov
process $\mathfrak{X}$ whose paths are right continuous and have no
discontinuities other than jumps, see e.g. \cite[Theorem 2.12]{Taira}. Then we
have established the following result:

\begin{theorem}
\label{Theorem1} Assuming Hypothesis 1, with $T\in\left[  0,\infty\right]  $.
Consider the Cauchy problem:$.$%
\begin{equation}
\left\{
\begin{array}
[c]{ll}%
u\left(  \cdot,t\right)  \in\mathcal{C}^{1}\left(  \left[  0,T\right]
,\mathcal{C}\left(  \mathcal{K},\mathbb{R}\right)  \right)  ; & \\
& \\
\frac{du\left(  x,t\right)  }{dt}=\int\limits_{\mathcal{K}}\left\{  j(x\mid
y)u(y,t)-j(y\mid x)u(x,t)\right\}  dy, & t\in\left[  0,T\right]
,x\in\mathcal{K};\\
& \\
u\left(  x,0\right)  =u_{0}\left(  x\right)  \in\mathcal{C}\left(
\mathcal{K},\mathbb{R}_{+}\right)  . &
\end{array}
\right.  \label{Eq_Cauchy_problem}%
\end{equation}
There exists a probability measure $p_{t}\left(  x,\cdot\right)  $,
$t\in\left[  0,T\right]  $, with $T=T(u_{0})$, $x\in\mathcal{K}$, on the Borel
$\sigma$-algebra of $\mathcal{K}$, such that the Cauchy problem
(\ref{Eq_Cauchy_problem}) has a unique solution of the form%
\[
h(x,t)=\int\limits_{\mathcal{K}}u_{0}(y)p_{t}\left(  x,dy\right)  .
\]
In addition, $p_{t}\left(  x,\cdot\right)  $ is the transition function of a
Markov process $\mathfrak{X}$ whose paths are right continuous and have no
discontinuities other than jumps.
\end{theorem}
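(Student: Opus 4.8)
The plan is to read off the solution from the semigroup $\{e^{t\boldsymbol{J}}\}_{t\geq 0}$ already produced above and then translate it into the language of transition functions and Markov processes. First I would note that, since $\boldsymbol{J}$ is a bounded operator on the Banach space $\mathcal{C}(\mathcal{K},\mathbb{R})$, the series $e^{t\boldsymbol{J}}=\sum_{n\geq 0}\frac{t^{n}}{n!}\boldsymbol{J}^{n}$ converges in operator norm for every $t$, is $\mathcal{C}^{1}$ in $t$ with $\frac{d}{dt}e^{t\boldsymbol{J}}=\boldsymbol{J}e^{t\boldsymbol{J}}=e^{t\boldsymbol{J}}\boldsymbol{J}$, so $h(x,t):=e^{t\boldsymbol{J}}u_{0}(x)$ lies in $\mathcal{C}^{1}\bigl([0,T],\mathcal{C}(\mathcal{K},\mathbb{R})\bigr)$ and solves (\ref{Eq_Cauchy_problem}). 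Uniqueness is then immediate: if $u(\cdot,t)$ is any $\mathcal{C}^{1}$ solution, boundedness of $\boldsymbol{J}$ makes $e^{-t\boldsymbol{J}}$ available and $\frac{d}{dt}\bigl(e^{-t\boldsymbol{J}}u(\cdot,t)\bigr)=e^{-t\boldsymbol{J}}(\boldsymbol{J}u-\boldsymbol{J}u)=0$, so $u(\cdot,t)=e^{t\boldsymbol{J}}u_{0}$.

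Next I would record why $\{e^{t\boldsymbol{J}}\}_{t\geq 0}$ is a Feller semigroup, since this is the hypothesis needed for the two correspondence theorems. Strong continuity and the range condition $\mathrm{rank}(\lambda I-\boldsymbol{J})=\mathcal{C}(\mathcal{K},\mathbb{R})$ for $\lambda>\|\boldsymbol{J}\|$ follow from the norm-convergent Neumann series for $(I-\tfrac{1}{\lambda}\boldsymbol{J})^{-1}$ displayed above. The analytic heart of the matter is the positive maximum principle under Hypothesis 1: if $h$ attains its maximum at $x_{0}$ with $h(x_{0})\geq 0$, then writing $\boldsymbol{J}h$ as a balanced difference one gets
\begin{align*}
(\boldsymbol{J}h)(x_{0}) &= \int_{\mathcal{K}}j(x_{0}\mid y)\bigl(h(y)-h(x_{0})\bigr)\,dy - h(x_{0})\int_{\mathcal{K}}\bigl(j(y\mid x_{0})-j(x_{0}\mid y)\bigr)\,dy \\
&\leq 0,
\end{align*}
because $j(x_{0}\mid y)\geq 0$ and $h(y)-h(x_{0})\leq 0$ kill the first term while Hypothesis 1 and $h(x_{0})\geq 0$ kill the second. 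By the Hille--Yosida--Ray theorem \cite[Theorem 2.18]{Taira}, $\boldsymbol{J}$ generates a strongly continuous, positive, contraction semigroup on $\mathcal{C}(\mathcal{K},\mathbb{R})$; since $\mathcal{K}$ is compact the constant function $1$ belongs to $\mathcal{C}(\mathcal{K},\mathbb{R})$, and positivity together with contractivity force $0\leq e^{t\boldsymbol{J}}h\leq 1$ whenever $0\leq h\leq 1$, which is exactly the defining property of a Feller semigroup in the Definition above.

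Finally I would invoke the two standard dictionaries. By the Feller-semigroup/transition-function correspondence \cite[Theorem 2.15]{Taira}, there is a uniformly stochastically continuous $C_{0}$-transition function $p_{t}(x,dy)$ on the Borel $\sigma$-algebra of $\mathcal{K}$, satisfying condition $(L)$, with $e^{t\boldsymbol{J}}u_{0}(x)=\int_{\mathcal{K}}u_{0}(y)\,p_{t}(x,dy)$; here $p_{t}(x,\cdot)$ is a subprobability measure, a genuine probability measure precisely in the conservative subcase (which is how the later analysis produces the absorbing time and the dependence $T=T(u_{0})$). By the transition-function/Markov-process correspondence \cite[Theorem 2.12]{Taira}, there is a strong Markov process $\mathfrak{X}$ with state space $\mathcal{K}$ and transition function $p_{t}$ whose paths are right continuous with no discontinuities other than jumps. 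Combining this with the uniqueness from the first paragraph gives the representation $h(x,t)=\int_{\mathcal{K}}u_{0}(y)\,p_{t}(x,dy)$. The only genuine obstacle I anticipate is the bookkeeping of verifying that all the hypotheses of the cited theorems are met in this concrete setting — the positive maximum principle (done above), strong continuity and the range condition for Hille--Yosida--Ray, and uniform stochastic continuity together with condition $(L)$ for the passage to a process; the rest is routine.
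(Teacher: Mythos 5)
Your proposal follows essentially the same route as the paper: boundedness of $\boldsymbol{J}$, the positive maximum principle under Hypothesis 1, the range condition via the Neumann series, the Hille--Yosida--Ray theorem to obtain a Feller semigroup, and then Taira's two correspondence theorems to pass to the transition function and the Markov process. You fill in two details the paper leaves implicit — the explicit balanced-difference verification of the maximum principle and the uniqueness argument via $e^{-t\boldsymbol{J}}$ — and correctly flag that $p_{t}(x,\cdot)$ is in general only a subprobability measure, but the argument is the paper's own.
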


Assuming that $u_{0}(y)$ is the characteristic function of a Borel set $B$,
and that at time zero the system was in a state belonging to $B$, then,
$h(x,t)$ is the probability that the system be in a state belonging to $B$ at
the time $t$, with $t\leq T$.

\subsection{Further remarks}

The points of $\mathcal{K}$ are the states of a Markovian system which moves
randomly between these states. Let $\mathcal{B}(\mathcal{K})$ denote the Borel
$\sigma$-algebra of $\mathcal{K}$. Then, for $x\in\mathcal{K}$, $E\in
\mathcal{B}(\mathcal{K})$, $p_{t}\left(  x,E\right)  $ gives the transition
probability that the system starting at state $x$ will move to a state
belonging to set $E$ at time $t$. A Markov transition function $p_{t}\left(
x,\cdot\right)  $ is said to be \textit{conservative} if it satisfies%
\[
p_{t}\left(  x,\mathcal{K}\right)  =1\text{ for }t\geq0\text{ and each }%
x\in\mathcal{K}\text{.}%
\]
The transition function given in Theorem \ref{Theorem1} is not necessarily
conservative, i.e. $p_{t}\left(  x,\mathcal{K}\right)  \leq1$ for $t\geq0$ and
each $x\in\mathcal{K}$. In this case, the stochastic process $\mathfrak{X}$
may die, which means that the system may get trapped in a terminal state
$\Delta\notin\mathcal{K}$. We take $\mathcal{K}_{\Delta}:=\mathcal{K}%
{\textstyle\bigsqcup}
\left\{  \Delta\right\}  $, and extend $p_{t}\left(  x,\cdot\right)  $ to a
Markov transition function $\widetilde{p}_{t}\left(  x,\cdot\right)  $ on
$\mathcal{K}_{\Delta}$\ as follows:%
\[
\left\{
\begin{array}
[c]{ll}%
\widetilde{p}_{t}\left(  x,E\right)  =p_{t}\left(  x,E\right)  \text{,} &
x\in\mathcal{K}\text{, }E\in\mathcal{B}(\mathcal{K})\text{;}\\
& \\
\widetilde{p}_{t}\left(  x,\left\{  \Delta\right\}  \right)  =1-p_{t}\left(
x,\mathcal{K}\right)  \text{,} & x\in\mathcal{K}\text{;}\\
& \\
\widetilde{p}_{t}\left(  \partial,\mathcal{K}\right)  =0; & \\
& \\
\widetilde{p}_{t}\left(  \Delta,\left\{  \Delta\right\}  \right)  =1, &
\end{array}
\right.
\]
for any $t\geq0$.

Intuitively, the system assumes configurations in space $\mathcal{K}$ until it
gets trapped in the configuration $\Delta$. The random variable%
\[
\varkappa\left(  \omega\right)  :=\inf\left\{  t\in\left[  0,\infty\right]
;\mathfrak{X}_{t}(\omega)=\Delta\right\}
\]
is called the \textit{lifetime} of the process $\mathfrak{X}$. Notice that
$T\leq\varkappa\left(  \omega\right)  $ in Theorem \ref{Theorem1}, which means
that the integro-differential equation in (\ref{Eq_Cauchy_problem}) does not
have physical meaning for $t>\varkappa\left(  \omega\right)  $. The state
$\Delta$ is a particular case of an absorbing state, which is a $\Gamma
\in\mathcal{K}_{\Delta}$ such that $\widetilde{p}_{t}\left(  \Gamma,\left\{
\Gamma\right\}  \right)  =1$, for any $t\geq0$, see \cite[Lemma 5.3]{Dynkin}.

Finally, the integro-differential equation in (\ref{Eq_Cauchy_problem}) is a
master equation in an ultrametric space, see \cite[Chaptre V]{van Kampen},
\cite{Av-4}.

\section{\label{Section_3}$p$-Adic heat equations on balls}

Given a function $f\in\mathcal{C}(p\mathbb{Z}_{p},\mathbb{R})$, we extend it
to $\mathbb{Q}_{p}$ by taking $f(x)=0$ for $x\notin p\mathbb{Z}_{p}$. In this
way $f\in L^{1}(p\mathbb{Z}_{p})\cap L^{2}(p\mathbb{Z}_{p})\subset
L^{1}(\mathbb{Q}_{p})\cap L^{2}(\mathbb{Q}_{p})$, and thus the Fourier
transform $\widehat{f}$ is defined in the standard way.

Given a radial function $w\in\mathcal{C}(p\mathbb{Z}_{p},\mathbb{R}_{+})$, we
define the operator
\[
\boldsymbol{W}f\left(  x\right)  =\int\limits_{p\mathbb{Z}_{p}}\left\{
f(y)-f(x)\right\}  w(\left\vert x-y\right\vert _{p})dy\text{ for }%
f\in\mathcal{C}(p\mathbb{Z}_{p},\mathbb{R}).
\]
Since $p\mathbb{Z}_{p}$ is an additive group, $\boldsymbol{W}$ is a linear
bounded operator from $\mathcal{C}(p\mathbb{Z}_{p},\mathbb{R})$ into itself.
We set $\gamma=\int_{p\mathbb{Z}_{p}}$ $w(\left\vert y\right\vert _{p}%
)dy\geq0$, then%
\[
\boldsymbol{W}f\left(  x\right)  =f(x)\ast w(x)-\gamma f(x),
\]
and%
\[
\boldsymbol{W}f\left(  x\right)  =\mathcal{F}_{\xi\rightarrow x}^{-1}(\left(
\widehat{w}(\xi)-\gamma\right)  \mathcal{F}_{x\rightarrow x}f),
\]
i.e. $\boldsymbol{W}$ is a pseudodifferential operator with symbol
$\widehat{w}(\xi)-\gamma$.

The function $\widehat{w}(\xi)=\widehat{w}(\left\vert \xi\right\vert _{p})$ is
radial and integrable, furthermore, it verifies that
\[
\widehat{w}(\xi)-\gamma=\left\{
\begin{array}
[c]{lll}%
0 & \text{if} & \left\vert \xi\right\vert _{p}\leq p\\
&  & \\
-\left(  1-p^{-1}\right)  \sum\limits_{j=1}^{-ord(\xi)-1}p^{-j}w(p^{-j}%
)-p^{ord(\xi)}w\left(  p^{1+ord(\xi)}\right)  & \text{if} & \left\vert
\xi\right\vert _{p}\geq p^{2}.
\end{array}
\right.
\]
Indeed,
\begin{align}
\widehat{w}(\xi)  &  =\int\limits_{p\mathbb{Z}_{p}}\chi_{p}\left(
x\xi\right)  w(\left\vert x\right\vert _{p})dx=\sum\limits_{j=1}^{\infty
}p^{-j}w(p^{-j})\int\limits_{\mathbb{Z}_{p}^{\times}}\chi_{p}\left(  p^{j}%
y\xi\right)  dy\nonumber\\
&  =\sum\limits_{j=1}^{\infty}p^{-j}w(p^{-j})\left\{
\begin{array}
[c]{lll}%
1-p^{-1} & \text{if} & -ord(\xi)\leq j\\
-p^{-1} & \text{if} & -ord(\xi)=j+1\\
0 & \text{if} & -ord(\xi)\geq j+2
\end{array}
\right. \nonumber\\
&  =\left\{
\begin{array}
[c]{lll}%
\left(  1-p^{-1}\right)  \sum\limits_{j=1}^{\infty}p^{-j}w(p^{-j}) & \text{if}
& \left\vert \xi\right\vert _{p}\leq p\\
&  & \\
\left(  1-p^{-1}\right)  \sum\limits_{j=-ord(\xi)}^{\infty}p^{-j}%
w(p^{-j})-p^{ord(\xi)}w\left(  p^{1+ord(\xi)}\right)  & \text{if} & \left\vert
\xi\right\vert _{p}\geq p^{2}.
\end{array}
\right.  \label{Formula_1}%
\end{align}
The announced formula follows from (\ref{Formula_1}).

\subsection{$p$-Adic wavelets}

We now recall that the set of functions $\left\{  \Psi_{rmj}\right\}  $
defined as%
\begin{equation}
\Psi_{rmj}\left(  x\right)  =p^{\frac{-r}{2}}\chi_{p}\left(  p^{-1}j\left(
p^{r}x-m\right)  \right)  \Omega\left(  \left\vert p^{r}x-m\right\vert
_{p}\right)  , \label{Eq_40}%
\end{equation}
where $r\in\mathbb{Z}$, $j\in\left\{  1,\cdots,p-1\right\}  $, and $m$ runs
through a fixed set of representatives of $\mathbb{Q}_{p}/\mathbb{Z}_{p}$, is
an orthonormal basis of $L^{2}(\mathbb{Q}_{p})$. Furthermore,
\begin{equation}%
{\textstyle\int\limits_{\mathbb{Q}_{p}}}
\Psi_{rmj}\left(  x\right)  dx=0, \label{Eq_41}%
\end{equation}
see e.g. \cite[Theorem 3.29]{KKZuniga} or \cite[Theorem 9.4.2]{Alberio et al}.

By using that $\Omega\left(  \left\vert p^{r}x-m\right\vert _{p}\right)
=1\Leftrightarrow x\in p^{-r}m+p^{-r}\mathbb{Z}_{p}$, one gets that the
wavelets supported in $p\mathbb{Z}_{p}$ are exactly those satisfying%
\begin{equation}
r\leq-1\text{ and }m\in p^{r+1}\mathbb{Z}_{p}\cap\mathbb{Q}_{p}/\mathbb{Z}%
_{p}. \label{Condition}%
\end{equation}
The wavelets satisfying $p\mathbb{Z}_{p}\subsetneqq\mathrm{supp}$ $\Psi_{rmj}
$, i.e.%
\begin{equation}
p\mathbb{Z}_{p}\subsetneqq p^{-r}m+p^{-r}\mathbb{Z}_{p}, \label{Condition_3}%
\end{equation}
are those with $m=0$, $r<-1$, $j\in\left\{  1,\ldots,p-1\right\}  $. Indeed,
from (\ref{Condition_3}), $0\in p^{-r}m+p^{-r}\mathbb{Z}_{p}$, and then
$p^{-r}m+p^{-r}\mathbb{Z}_{p}=p^{-r}\mathbb{Z}_{p}$, and $p^{-r}m\in
p^{-r}\mathbb{Z}_{p}$, i.e. $m=0$. Finally, $p\mathbb{Z}_{p}\subsetneqq
p^{-r}\mathbb{Z}_{p}$ implies that $r<-1$. Now the the restriction of these
wavelets to $p\mathbb{Z}_{p}$ is exactly%
\[
p^{\frac{-r}{2}}\Omega\left(  p\left\vert x\right\vert _{p}\right)  .
\]
Finally, under condition (\ref{Condition_3}), with $m\neq0$, the restriction
of $\Psi_{rmj}$\ to $p\mathbb{Z}_{p}$ is the constant function zero.

Any function $f$ from $\mathcal{C}(p\mathbb{Z}_{p},\mathbb{R})$ admits an
expansion of the form%
\begin{equation}
f(x)=\left(  \text{ }\int\limits_{p\mathbb{Z}_{p}}f\left(  x\right)
dx\right)  p^{\frac{1}{2}}\Omega\left(  p\left\vert x\right\vert _{p}\right)
+\sum\limits_{rmj}C_{rmj}\Psi_{rmj}\left(  x\right)  , \label{Expansion}%
\end{equation}
where $r$ and $m$ satisfy (\ref{Condition}) and $j\in\left\{  1,\cdots
,p-1\right\}  $. This result is a variation of a classical result, see e.g.
$\ $\cite[Proposition 1]{Zuniga-Galindo-PNAS} and the references therein.
Furthermore, the wavelets appearing in (\ref{Expansion}) satisfy%
\begin{equation}
\boldsymbol{W}\Psi_{rmj}\left(  x\right)  =\left(  \widehat{w}(p^{1-r}%
)-\gamma\right)  \Psi_{rmj}\left(  x\right)  , \label{Spectra}%
\end{equation}
where $\Psi_{rmj}\left(  x\right)  $\ is a wavelet supported in $p\mathbb{Z}%
_{p}$, see e.g. \cite[Theorem 3.29]{KKZuniga}, \cite[Theorem 9.4.2]{Alberio et
al}. Notice that $\widehat{w}(p^{1-r})-\gamma\leq0$ for $r\leq-1$.

\section{\label{Section_4}$p$-adic Models of Relaxation of Complex Systems}

\subsection{Random walks on complex landscapes}

It is widely accepted that the dynamics of a large class of complex systems
such as glasses and proteins can be described by a random walk on a complex
energy landscape, see e.g. \cite{Fraunfelder et al 2}, \cite{Fraunfelder et al
3}, \cite[and \ the references therein]{Kozyrev SV}, \cite{Wales}.\ An energy
landscape (or simply a landscape) is a continuous function $\mathbb{U}%
:X\rightarrow\mathbb{R}$ that assigns to each physical state of a system its
energy. In many cases we can take $X$ to be a subset of $\mathbb{R}^{N}$. The
term complex landscape means that function $\mathbb{U}$ has many local minima.
In this case the method of \textit{interbasin kinetics} is applied, in this
approach, the study of a random walk on a complex landscape is based on a
description of the kinetics generated by transitions between groups of states
(basins). Minimal basins correspond to local minima of energy, and large
basins have a hierarchical structure. Procedures for constructing hierarchies
of basins and models of interbasin kinetics from an arbitrary energy landscape
have been studied extensively, see e.g. \cite{Becker et al}, \cite{Stillinger
et al 1}, \cite{Stillinger et al 2}. By using these methods, a complex
landscape is approximated by a \textit{disconnectivity graph} (a rooted tree)
and by a function on the tree describing the distribution of the activation
energies. For further details the reader may consult \cite[and \ the
references therein]{Kozyrev SV}, see also \cite{KKZuniga}.

The dynamics of the system is described \ by a master equation on simple
finite graph $\mathcal{T}$ (for instance a disconnectivity graph). Given two
vertices $I$, $J\in V(\mathcal{T})$, $w_{I,J}\geq0$ denotes the transition
rate peer unit of time from state $J$ to state $I$. We set
\[
w_{I}:=\sum\limits_{K\in V(\mathcal{T})}w_{K,I}
\]
and
\[
W=\left[  W_{I,J}\right]  _{I,J\in V(\mathcal{T})}\text{, where }%
W_{I,J}=w_{I,J}-w_{I}\delta_{IJ},
\]
where $\delta_{IJ}$\ denotes the Kronecker delta. The dynamics of the system
is then encoded in a system of kinetic equations (the master equation) of the
form:%
\begin{equation}
\frac{d}{dt}p_{I}\left(  t\right)  =\sum_{J\in V(\mathcal{G})}\left\{
w_{I,J}p_{J}\left(  t\right)  -w_{J,I}p_{I}\left(  t\right)  \right\}  ,\text{
for }I\in V(\mathcal{T}), \label{Equation_Discrete}%
\end{equation}
where $p_{I}\left(  t\right)  $\ is the probability of finding the system at
state $I$ at time $t$. We denote by $p(t)$ the column vector $\left[
p_{I}\left(  t\right)  \right]  _{I\in V(\mathcal{T})}$, then the master
equation takes the form%
\begin{equation}
\frac{d}{dt}p\left(  t\right)  =Wp\left(  t\right)  .
\label{Master_Equation_0}%
\end{equation}
Notice that $W$ is $Q$-matrix, i.e. $W_{I,J}\geq0$ for $I\neq J$, and
$\sum_{I\in V(\mathcal{T})}W_{I,J}=0$. This implies that%
\[
p(t)=e^{tW}p(0)
\]
is the transition function of a time-homogeneous continuous Markov chain, with
initial probability distribution $p(0)$.

In practical applications the matrices $W$ are very large, then the
determination of their spectra is a highly non-trivial problem.

\subsection{$p$-adic continuous versions}

In order to construct a $p$-adic continuous version of master equation
(\ref{Master_Equation_0}), it is sufficient to find a linear bounded operator
$\boldsymbol{J}:\mathcal{C}\left(  \mathcal{K}_{N},\mathbb{R}\right)
\rightarrow\mathcal{C}\left(  \mathcal{K}_{N},\mathbb{R}\right)  $, where
$\mathcal{K}_{N}$ is an open compact subset, such that its restriction
$\boldsymbol{J}_{N}=\boldsymbol{J}\mid_{X_{N}}$ to a finite dimensional vector
space $X_{N}\subset\mathcal{C}\left(  \mathcal{K}_{N},\mathbb{R}\right)  $ be
represented by the matrix $W$ in (\ref{Master_Equation_0}).

We set%
\[
G_{N}:=\mathbb{Z}_{p}/p^{N}\mathbb{Z}_{p}\text{ for }N\geq2\text{.}
\]
We identify $G_{N}$\ with the set of representatives of the form%
\begin{equation}
I=I_{0}+I_{1}p+\ldots+I_{N-1}p^{N-1}, \label{Eq_I}%
\end{equation}
where the $I_{j}$s are $p$-adic digits. We now identify each vertex of
$\mathcal{T}$ with a $p$-adic number of the form (\ref{Eq_I}). From now on, we
take $V(\mathcal{T})=G_{N}^{0}\subset G_{N}$. Here $N$ is a fixed positive
integer such that $\#G_{N}^{0}\leq p^{N}$.

We denote by $\Omega\left(  p^{N}\left\vert x-I\right\vert _{p}\right)  $ the
characteristic function of the ball centered at $I$ with radius $p^{-N}$,
which corresponds to the set $I+p^{N}\mathbb{Z}_{p}$. Now, we attach to
$\mathcal{T}$ the open compact subset \
\[
\mathcal{K}_{N}=%
{\textstyle\bigsqcup\limits_{I\in G_{N}^{0}}}
I+p^{N}\mathbb{Z}_{p}\text{,}
\]
\ and the $\mathbb{R}$-vector space $X_{N}$ generated by the functions
$\left\{  \Omega\left(  p^{N}\left\vert x-I\right\vert _{p}\right)  \right\}
_{I\in G_{N}^{0}}$. Then ${\large X}_{N}$ is the $\mathbb{R}$-vector space
consisting of all the test functions supported in $\mathcal{K}_{N}$\ having
the form%
\[
\varphi\left(  x\right)  =%
{\textstyle\sum\limits_{J\in G_{N}^{0}}}
\varphi\left(  J\right)  \Omega\left(  p^{N}\left\vert x-J\right\vert
_{p}\right)  \text{,}
\]
where $\varphi\left(  J\right)  \in\mathbb{R}$. This is the space of
continuous functions on $\mathcal{T}$. Notice that $X_{N}\subset
\mathcal{C}\left(  \mathcal{K}_{N},\mathbb{R}\right)  $.

We now define%
\begin{align*}
j_{N}(x  &  \mid y)=p^{N}%
{\textstyle\sum\limits_{\substack{I,K\in G_{N}^{0}\\I\neq K}}}
w_{I,K}\Omega\left(  p^{N}\left\vert x-I\right\vert _{p}\right)  \Omega\left(
p^{N}\left\vert y-K\right\vert _{p}\right)  \text{,}\\
j_{N}(y  &  \mid x)=p^{N}%
{\textstyle\sum\limits_{\substack{I,K\in G_{N}^{0}\\I\neq K}}}
w_{I,K}\Omega\left(  p^{N}\left\vert x-K\right\vert _{p}\right)  \Omega\left(
p^{N}\left\vert y-I\right\vert _{p}\right)  ,
\end{align*}
where $x$, $y\in\mathbb{Q}_{p}$, and $W=$ $\left[  w_{I,J}\right]  _{I,J\in
G_{N}^{0}}$. Notice that $j_{N}(x\mid y)$, $j_{N}(y\mid x)$ are test functions
from $\mathcal{D}(\mathcal{K}_{N}\times\mathcal{K}_{N},\mathbb{R})$.

We also define the operator%
\[
\boldsymbol{J}\varphi\left(  x\right)  =%
{\textstyle\int\limits_{\mathcal{K}_{N}}}
\left\{  j_{N}(x\mid y)\varphi\left(  y\right)  -j_{N}(y\mid x)\varphi\left(
x\right)  \right\}  dy\text{, for }\varphi\in\mathcal{D}(\mathcal{K}%
_{N},\mathbb{R})\text{.}%
\]
By using that%
\[%
{\textstyle\int\limits_{\mathcal{K}_{N}}}
j_{N}(x\mid y)\varphi\left(  y\right)  dy=%
{\textstyle\sum\limits_{_{I\in G_{N}^{0}}}}
\left\{
{\textstyle\sum\limits_{\substack{J\in G_{N}^{0}\\J\neq I}}}
w_{I,J}\varphi\left(  J\right)  \right\}  \Omega\left(  p^{N}\left\vert
x-I\right\vert _{p}\right)  \text{,}%
\]%
\[%
{\textstyle\int\limits_{\mathcal{K}_{N}}}
j_{N}(y\mid x)\varphi\left(  x\right)  dy=%
{\textstyle\sum\limits_{_{I\in G_{N}^{0}}}}
\left\{
{\textstyle\sum\limits_{\substack{J\in G_{N}^{0}\\J\neq I}}}
w_{J,I}\varphi\left(  I\right)  \right\}  \Omega\left(  p^{N}\left\vert
x-I\right\vert _{p}\right)  .
\]
one gets that
\[
\boldsymbol{J}_{N}\varphi\left(  x\right)  =%
{\textstyle\sum\limits_{_{I\in G_{N}^{0}}}}
\left\{
{\textstyle\sum\limits_{J\in G_{N}^{0}}}
w_{I,J}\varphi\left(  J\right)  -w_{J,I}\varphi\left(  I\right)  \right\}
\Omega\left(  p^{N}\left\vert x-I\right\vert _{p}\right)  ,
\]
which implies that the operator $\boldsymbol{J}_{N}:X_{N}\rightarrow X_{N}$ is
represented by the matrix $W$, and
\begin{equation}
\frac{d}{dt}u\left(  x,t\right)  =\boldsymbol{J}u\left(  x,t\right)  \text{,
}x\in\mathcal{K}_{N},t\geq0 \label{Master_Equation_0_A}%
\end{equation}
is a continuous version of (\ref{Master_Equation_0}). To construct solvable
models we look for operators $\boldsymbol{J}$ such that
(\ref{Master_Equation_0_A}) be a $p$-adic heat equation, i.e. Theorem
\ref{Theorem1} holds true for (\ref{Master_Equation_0_A}). In the continuous
approach, the matrices $W$ are replaced by the operators $\boldsymbol{J}$. The
determination of the spectra of a large class of these operators can be
obtained using $p$-adic wavelets. This section is based in our work
\cite{Zuniga-Galindo-JMAA}, see also \cite{Nakao-Mikhailov}.

\section{\label{Section_5}$p$-Adic transition networks}

\subsection{Some formulae}

We rewrite the equation (\ref{Eq_Cauchy_problem}) as%

\begin{equation}
\frac{du\left(  x,t\right)  }{dt}=\int\limits_{\mathcal{K}}j(x\mid y)\left\{
u(y,t)-u(x,t)\right\}  dy-u\left(  x,t\right)  S(x), \label{Eq_Master_1}%
\end{equation}
where%
\[
S(x):=\left\{  \int\limits_{\mathcal{K}}j(y\mid x)-j(x\mid y)\right\}
dy\geq0.
\]
By using (\ref{Eq_1A}) and assuming that $u\left(  \cdot,t\right)  $
$\in\mathcal{C}(\mathcal{K},\mathbb{R})$,%
\[
u\left(  x,t\right)  =\sum\limits_{b\in\mathcal{G}}u_{b}(x,t)\text{, where
}u_{b}(\cdot,t)\in\mathcal{C}\left(  b+p\mathbb{Z}_{p},\mathbb{R}\right)  ,
\]
and $u_{b}(x,t)=u\left(  x,t\right)  $ for $x\in b+p\mathbb{Z}_{p}$. Now by
using (\ref{Eq_1B}),
\[
j\left(  x\mid y\right)  =\sum\limits_{a\in\mathcal{G}}\sum\limits_{b\in
\mathcal{G}}j_{a,b}(x\mid y)\text{, where }j_{a,b}(x\mid y)\in\mathcal{C}%
\left(  \left(  a+p\mathbb{Z}_{p}\right)  \times\left(  b+p\mathbb{Z}%
_{p}\right)  ,\mathbb{R}_{+}\right)  ,
\]
and%
\[
j\left(  y\mid x\right)  =\sum\limits_{a\in\mathcal{G}}\sum\limits_{b\in
\mathcal{G}}j_{b,a}(y\mid x)\text{, where }j_{b,a}(y\mid x)\in\mathcal{C}%
\left(  \left(  a+p\mathbb{Z}_{p}\right)  \times\left(  b+p\mathbb{Z}%
_{p}\right)  ,\mathbb{R}_{+}\right)  .
\]
We set
\[
\boldsymbol{J}_{0}f(x):=\int\limits_{\mathcal{K}}j(x\mid y)\left\{
f(y)-f(x)\right\}  dy\text{ for }f\left(  x\right)  =\sum\limits_{b\in
\mathcal{G}}f_{b}(x)\in\mathcal{C}\left(  \mathcal{K},\mathbb{R}\right)  .
\]
The following formula for operator $\boldsymbol{J}_{0}$ holds true:%
\begin{equation}
\boldsymbol{J}_{0}f_{a}(x)=\sum\limits_{b\in\mathcal{G}}\text{ }%
\int\limits_{b+p\mathbb{Z}_{p}}\left\{  f_{b}(y)-f_{a}(x)\right\}
j_{a,b}(x\mid y)dy. \label{Eq_Formula_j_a}%
\end{equation}
Indeed, for $x\in a+p\mathbb{Z}_{p}$, $f(x)=f_{a}(x)$ and%
\begin{gather*}
\boldsymbol{J}_{0}f_{a}(x)=\int\limits_{\mathcal{K}}\left(  f(y)-f_{a}%
(x)\right)  j(x\mid y)dy=\int\limits_{\mathcal{K}}\left(  \sum\limits_{b\in
\mathcal{G}}f_{b}(y)-f_{a}(x)\right)  j(x\mid y)dy\\
=\int\limits_{\mathcal{K}}\sum\limits_{b\in\mathcal{G}}f_{b}(y)j(x\mid
y)dy-\int\limits_{\mathcal{K}}f_{a}(x)j(x\mid y)dy\\
=\int\limits_{\mathcal{K}}\sum\limits_{b\in\mathcal{G}}f_{b}(y)j_{a,b}(x\mid
y)dy-\int\limits_{\mathcal{K}}f_{a}(x)j(x\mid y)dy\\
=\sum\limits_{b\in\mathcal{G}}\text{ }\int\limits_{b+p\mathbb{Z}_{p}}%
f_{b}(y)j_{a,b}(x\mid y)dy-\sum\limits_{b\in\mathcal{G}}\text{ }%
\int\limits_{b+p\mathbb{Z}_{p}}f_{a}(x)j_{a,b}(x\mid y)dy\\
=\sum\limits_{b\in\mathcal{G}}\text{ }\int\limits_{b+p\mathbb{Z}_{p}}\left(
f_{b}(y)-f_{a}(x)\right)  j_{a,b}(x\mid y)dy.
\end{gather*}
By using (\ref{Eq_Master_1})-\ref{Eq_Formula_j_a}), and the fact that
$S\left(  x\right)  =\sum\limits_{b\in\mathcal{G}}S_{b}(x)$, with $S_{b}%
\in\mathcal{C}\left(  b+p\mathbb{Z}_{p},\mathbb{R}\right)  $, one obtains that
(\ref{Eq_Master_1}) is equivalent to%
\begin{equation}
\frac{du_{a}\left(  x,t\right)  }{dt}=\sum\limits_{b\in\mathcal{G}}\left\{
\text{ }\int\limits_{b+p\mathbb{Z}_{p}}\left\{  u_{b}(y,t)-u_{a}(x,t)\right\}
j_{a,b}(x\mid y)dy\right\}  -u_{a}(x,t)S_{a}(x), \label{Eq_Master_2}%
\end{equation}
for $a\in\mathcal{G}$.

\subsection{Definition of $p$-adic transition network}

A $p$\textit{-adic transition network (or an ultrametric network, or simply a
network)} is a model of a complex system consisting of three components: an
energy landscape, a Markov process on the energy landscape, and a master
equation. The energy landscape consists of a finite number of \textit{basins}
$a+p\mathbb{Z}_{p}$, $a\in\mathcal{G}$. Each basin is formed by infinitely
many \textit{network} \textit{configurations} organized hierarchically in a
tree. We use the words \textit{state}, \textit{conformational substate} as
synonyms of \textit{configuration}. The transitions between the basins are
determined by \textit{the transition density matrix of the network} as
$\left[  j_{a,b}(x\mid y)\right]  _{a,b\in\mathcal{G}}$, and the \textit{sink
function} as $\left[  S_{a}(x)\right]  _{a\in\mathcal{G}}$. The function
$j_{a,b}(x\mid y)\geq0$ is the transition density (per unit of time) that the
network goes from configuration $y\in b+p\mathbb{Z}_{p}$ to configuration
$x\in a+p\mathbb{Z}_{p}$.

The Markov process in the energy landscape encodes the temporal evolution of
the network as random transitions between configurations from the energy
landscape. The master equation describe the time evolution of the density of
the configurations. Some authors use the term population instead of density.
We denote by $u_{a}(x,t)$, for $x\in a+p\mathbb{Z}_{p}$, the density of
configurations in the basin $a+p\mathbb{Z}_{p}$ at the time $t$.
The\textit{\ master equation of the network} is given by the system
(\ref{Eq_Master_2}).

We now set $\widetilde{u}_{a}(x,t):=u_{a}(a+x,t)$ \ for $x\in p\mathbb{Z}_{p}%
$,%
\[
\widetilde{j}_{a,b}\left(  x\mid y\right)  :=j_{a,b}\left(  a+x\mid
b+y\right)  \text{, \ \ }\widetilde{j}_{b,a}\left(  y\mid x\right)
:=j_{b,a}\left(  b+y\mid a+x\right)  ,
\]
for $x,y\in p\mathbb{Z}_{p}$, and $\widetilde{S}_{a}(x):=S(a+x)$ for $x\in
p\mathbb{Z}_{p}$.

\begin{notation}
We use the variable $x\in\mathcal{K}$ to denote the state of a network. If the
variable $x$ appears in a function having the decoration $\widetilde{\cdot}$,
like $\widetilde{u}_{a}(x,t)$, then $x\in p\mathbb{Z}_{p}$, i.e. the symbol
$\widetilde{\cdot}$ means that the change of variables of type $x\rightarrow
a+x$ has been performed. The purpose of this notation is to avoid introducing
additional variables to denote the states of the networks.
\end{notation}

With this notation master equation (\ref{Eq_Master_2}) can be rewritten as
\begin{equation}
\frac{d\widetilde{u}_{a}(x,t)}{dt}=\sum\limits_{b\in\mathcal{G}}\text{
}\left\{  \text{ }\int\limits_{p\mathbb{Z}_{p}}\left\{  \widetilde{u}%
_{b}(y,t)-\widetilde{u}_{a}(x,t)\right\}  \widetilde{j}_{a,b}(x\mid
y)dy\right\}  -\widetilde{u}_{a}(x,t)\widetilde{S}_{a}(x)\text{ for }%
a\in\mathcal{G}\text{.} \label{Master_Equation_B}%
\end{equation}
The densities $u_{a}(x,t)$, for $a\in\mathcal{G}$, may exist in times where
the underlying Markov process does not exist due to fact that the network got
trapped in an absorbing state. This type of solutions of the master equation
must be discarded. This restriction makes the study of the master equation a
very involved task. An ultrametric network is a generalization of Markov chain
with $\#\mathcal{G}$\ states, as we will see in the next sections.

\subsection{Some additional remarks}

The problem of determining the functions $\left[  j_{a,b}(x\mid y)\right]
_{a,b\in\mathcal{G}}$, $\left[  S_{a}(x)\right]  _{a\in\mathcal{G}}$ starting
with a free energy landscape of some system is an open problem that requires
to extend the ideas and techniques presented in \ \cite{Becker et al},
\cite{Wales}. This problem is not considered here. On the other hand, the
transition from a state $y\in b+p\mathbb{Z}_{p}$ to a state $x\in
a+p\mathbb{Z}_{p}$ can be perceived as overcoming the energy
barrier\ separating these states. Following \cite{Av-4} , it is natural to
propose using an Arrhenius type relation, to approximate $j_{a,b}\left(  x\mid
y\right)  $, more precisely,
\begin{equation}
j_{a,b}\left(  x\mid y\right)  \sim\exp\left\{  -\frac{\mathbb{U}_{a,b}\left(
x,y\right)  }{kT}\right\}  , \label{6_Energy_Lan}%
\end{equation}
where $\mathbb{U}_{a,b}\left(  x,y\right)  $ is the function describing the
height of the activation barrier for the transition from the state $y\in
b+p\mathbb{Z}_{p}$ to state $x\in a+p\mathbb{Z}_{p}$, $k$ is the Boltzmann
constant and $T$ is the temperature. It is important to emphasize that we are
\ not choosing a reaction coordinate to describe $j_{a,b}\left(  x\mid
y\right)  $. Formula (\ref{6_Energy_Lan})\ establishes a relation between the
structure of the energy landscape $\mathbb{U}_{a,b}\left(  x,y\right)  $ and
the transition function $j_{a,b}\left(  x\mid y\right)  $.

\section{\label{Section_6}Transition networks, fast transition modes, and
Markov chains}

In this section we assume that the restriction of the functions $j(x\mid y)$,
$j(y\mid x)$ to the ball $\left(  a+p\mathbb{Z}_{p}\right)  \times\left(
a+p\mathbb{Z}_{p}\right)  $ have the form%
\[
j(x\mid y)=w_{a}(\left\vert x-y\right\vert _{p})\text{ and \ }j(y\mid
x)=v_{a}(\left\vert x-y\right\vert _{p}),
\]
and that the restriction of the functions $j(x\mid y)$, $j(y\mid x)$ to the
ball $\left(  a+p\mathbb{Z}_{p}\right)  \times\left(  b+p\mathbb{Z}%
_{p}\right)  $, with $a\neq b$, have the form%
\begin{align*}
j_{a,b}(x  &  \mid y)=\lambda_{a,b}\Omega\left(  p\left\vert x-a\right\vert
_{p}\right)  \Omega\left(  p\left\vert y-b\right\vert _{p}\right)  ,\\
j_{b,a}(y  &  \mid x)=\mu_{b,a}\Omega\left(  p\left\vert x-a\right\vert
_{p}\right)  \Omega\left(  p\left\vert y-b\right\vert _{p}\right)  .
\end{align*}
We set%
\[
\lambda_{a,a}:=p\int\limits_{p\mathbb{Z}_{p}}w\left(  \left\vert z\right\vert
_{p}\right)  dz\text{ \ and }\mu_{a,a}:=p\int\limits_{p\mathbb{Z}_{p}}v\left(
\left\vert z\right\vert _{p}\right)  dz.
\]
Now, if $x\in a+p\mathbb{Z}_{p}$ then%
\[%
{\textstyle\int\limits_{\mathcal{K}}}
j(x\mid y)dy=\sum\limits_{\substack{b\in\mathcal{G} \\b\neq a}}p^{-1}%
\lambda_{a,b}+\int\limits_{p\mathbb{Z}_{p}}w_{a}\left(  \left\vert
z\right\vert _{p}\right)  dz=p^{-1}\sum_{a,b\in\mathcal{G}}\lambda
_{a,b}:=p^{-1}\overline{\lambda}_{a}.
\]
Similarly, if $x\in a+p\mathbb{Z}_{p}$ then
\[%
{\textstyle\int\limits_{\mathcal{K}}}
j(y\mid x)dy=\sum\limits_{\substack{b\in\mathcal{G} \\b\neq a}}p^{-1}\mu
_{a,b}+\int\limits_{p\mathbb{Z}_{p}}v_{a}\left(  \left\vert z\right\vert
_{p}\right)  dz=p^{-1}\sum_{a,b\in\mathcal{G}}\mu_{a,b}:=p^{-1}\overline{\mu
}_{a}.
\]
For $x\in a+p\mathbb{Z}_{p}$ we have%
\[
S_{a}(x)=p^{-1}\left(  \overline{\mu}_{a}-\overline{\lambda}_{a}\right)  .
\]
The master equation (\ref{Master_Equation_B}) takes the form%
\begin{gather}
\frac{d\widetilde{u}_{a}(x,t)}{dt}=\sum\limits_{\substack{b\in\mathcal{G}
\\b\neq a}}\lambda_{a,b}\int\limits_{p\mathbb{Z}_{p}}\left\{  \widetilde
{u}_{b}(y,t)-\widetilde{u}_{a}(x,t)\right\}  \Omega\left(  p\left\vert
x\right\vert _{p}\right)  \Omega\left(  p\left\vert y\right\vert _{p}\right)
dy\label{Master_Equation_2}\\
+\int\limits_{p\mathbb{Z}_{p}}\left\{  \widetilde{u}_{a}(y,t)-\widetilde
{u}_{a}(x,t)\right\}  w_{a}(\left\vert x-y\right\vert _{p})dy-p^{-1}\left(
\overline{\mu}_{a}-\overline{\lambda}_{a}\right)  \widetilde{u}_{a}%
(x,t),\nonumber
\end{gather}
for $a\in\mathcal{G}$. With the notation and $Aver$ $\widetilde{u}_{a}%
(\cdot,t)=\int_{p\mathbb{Z}_{p}}\widetilde{u}_{a}(x,t)dx$,
(\ref{Master_Equation_2}) becomes%
\begin{gather}
\frac{d\widetilde{u}_{a}(x,t)}{dt}=\sum\limits_{\substack{b\in\mathcal{G}
\\b\neq a}}\lambda_{a,b}\Omega\left(  p\left\vert x\right\vert _{p}\right)
Aver\text{ }\widetilde{u}_{b}(\cdot,t)-p^{-1}\left(  \overline{\mu}%
_{a}-\lambda_{a,a}\right)  \widetilde{u}_{a}(x,t)\label{Master_Equation_3B}\\
+\int\limits_{p\mathbb{Z}_{p}}\left\{  \widetilde{u}_{a}(y,t)-\widetilde
{u}_{a}(x,t)\right\}  w_{a}(\left\vert x-y\right\vert _{p})dy\text{, \ for
\ }a\in\mathcal{G}.\nonumber
\end{gather}

\begin{remark}
Notice that $\overline{\mu}_{a}=0$, implies that $\mu_{b,a}=0$ for
$a,b\in\mathcal{G}$ and since $\lambda_{a,b}\leq\mu_{b,a}$ for $a,b\in
\mathcal{G}$, by Hypothesis 1, we conclude that $\lambda_{a,b}=0$ for
$a,b\in\mathcal{G}$. For this reason from now on we assume that $\overline
{\mu}_{a}\neq0$ for some $a\in\mathcal{G}$.
\end{remark}

\subsection{The Cauchy problem}

We now consider the Cauchy problem associated with master equation
(\ref{Master_Equation_3B}). We look for a complex-valued solution of the form%
\[
\widetilde{v}_{a}(x,t)=p^{\frac{1}{2}}C_{0}^{\left(  a\right)  }%
(t)\Omega\left(  p\left\vert x\right\vert _{p}\right)  +\sum\limits_{rmj}%
C_{rmj}^{\left(  a\right)  }(t)\Psi_{rmj}\left(  x\right)  \text{ for }%
a\in\mathcal{G},
\]
thus, $\widetilde{u}_{a}(x,t)=\operatorname{Re}\left(  \widetilde{v}%
_{a}(x,t)\right)  $ for $a\in\mathcal{G}$. By substituting in
(\ref{Master_Equation_3B}) one gets%
\begin{equation}
\frac{d}{dt}C_{0}^{\left(  a\right)  }(t)=\sum\limits_{\substack{b\in
\mathcal{G} \\b\neq a}}\lambda_{a,b}C_{0}^{\left(  b\right)  }(t)-p^{-1}%
\left(  \overline{\mu}_{a}-\lambda_{a,a}\right)  C_{0}^{\left(  a\right)
}(t)\text{, for }a\in\mathcal{G}\text{,} \label{Eq_34}%
\end{equation}
i.e. $\frac{d}{dt}\left[  C_{0}^{\left(  a\right)  }(t)\right]  _{a\in
\mathcal{G}}=\Lambda\left[  C_{0}^{\left(  a\right)  }(t)\right]
_{a\in\mathcal{G}}$, with $\Lambda=\left[  \Lambda_{a,b}\right]
_{a,b\in\mathcal{G}}$, where%
\[
\Lambda_{a,b}=\left\{
\begin{array}
[c]{lll}%
\lambda_{a,b} & \text{if} & a\neq b\\
&  & \\
-p^{-1}\left(  \overline{\mu}_{a}-\lambda_{a,a}\right)  & \text{if} & a=b.
\end{array}
\right.
\]
Now, by using (\ref{Spectra}),%
\begin{equation}
\frac{d}{dt}C_{rmj}^{\left(  a\right)  }(t)=\left(  \widehat{w}_{a}\left(
p^{1-r}\right)  -p^{-1}\overline{\mu}_{a}\right)  C_{rmj}^{\left(  a\right)
}(t)\text{, } \label{Eq_35}%
\end{equation}
for $a\in\mathcal{G}$.

Therefore $C_{0}(t)=e^{\Lambda t}C_{0}(0)$,%
\[
C_{rmj}^{\left(  a\right)  }(t)=e^{\left(  \widehat{w}_{a}\left(
p^{1-r}\right)  -p^{-1}\overline{\mu}_{a}\right)  t}C_{rmj}^{\left(  a\right)
}(0)\text{, for }a\in\mathcal{G}\text{,}
\]
and%
\begin{gather}
\left[  u_{a}\left(  x,t\right)  \right]  _{a\in\mathcal{G}}=p^{\frac{1}{2}%
}\left[  C_{0}^{\left(  a\right)  }\left(  t\right)  \Omega\left(  p\left\vert
x-a\right\vert _{p}\right)  \right]  _{a\in\mathcal{G}}\label{Solution_IVP}\\
+\left[
{\textstyle\sum\limits_{rmj}}
e^{\left(  \widehat{w}_{a}\left(  p^{1-r}\right)  -p^{-1}\overline{\mu}%
_{a}\right)  t}\operatorname{Re}\left(  C_{rmj}^{\left(  a\right)  }\left(
0\right)  \Psi_{rmj}\left(  x\right)  \right)  \right]  _{a\in\mathcal{G}%
},\nonumber
\end{gather}
where%
\[
\left[  C_{0}^{\left(  a\right)  }\left(  t\right)  \right]  _{a\in
\mathcal{G}}=e^{\Lambda t}\left[  C_{0}^{\left(  a\right)  }\left(  0\right)
\right]  _{a\in\mathcal{G}}.
\]
Notice that $\widehat{w}_{a}\left(  p^{1-r}\right)  -p^{-1}\overline{\mu}%
_{a}\leq0$ for any $a\in\mathcal{G}$. Notice that $u\left(  x,t\right)  $ is a
real-valued continuous function on $\mathcal{K\times}\left[  0,\infty\right)
$.

\subsection{Existence of fast transition modes}

We set $u^{\left(  0\right)  }\left(  x\right)  =u\left(  x,0\right)  $ for
the initial value of $u\left(  x,t\right)  $, see (\ref{Solution_IVP}). Take
$0\leq u^{\left(  0\right)  }\left(  x\right)  \leq1$. We first notice that
since
\begin{equation}
u(x,t)=\int\limits_{\mathcal{K}}u^{\left(  0\right)  }(y)p_{t}\left(
x,dy\right)  , \label{Solution}%
\end{equation}
then $u(x,t)\geq0$ for any $(x,t)\in\mathcal{K\times}\left[  0,\infty\right)
$. We now recall that if the Markov process $\mathcal{X}$ attached to the
network is conservative, i.e. if%
\[
\int\limits_{\mathcal{K}}1_{\mathcal{K}}(y)p_{t}\left(  x,dy\right)
=1_{\mathcal{K}}(x),
\]
then by (\ref{Solution}), $u(x,t)\leq1$ for any $(x,t)\in\mathcal{K\times
}\left[  0,\infty\right)  $.

We now suppose that the Markov process $\mathfrak{X}$ attached to the network
is not conservative. If the range of $u\left(  x,t\right)  $ contains an
interval of the form $\left[  1,1+\delta\right]  $ for some $\delta>0$, we
define%
\begin{align*}
\tau &  =\tau\left(  u^{\left(  0\right)  }\right)  =\min_{T\in\left(
0,\infty\right)  }\left\{  T;\text{ }u\left(  x_{0},t\right)  \geq1\text{ for
some }x_{0}\in\mathcal{K}\text{, and }t\in\left[  T,T+\epsilon\right]  \text{
for some }\epsilon>0\right\} \\
&  =\min_{a\in\mathcal{G}}\min_{T\in\left(  0,\infty\right)  }\left\{
T;\text{ }u_{a}\left(  x_{0},t\right)  \geq1\text{ for some }x_{0}\in
a+p\mathbb{Z}p\text{, and }t\in\left[  T,T+\epsilon\right]  \text{ for some
}\epsilon>0\right\}  .
\end{align*}

If the range of $0\leq u\left(  x,t\right)  \leq1$ for any $t\geq0$ and any
$x\in\mathcal{K}$, we define%
\[
\tau=\tau\left(  u^{\left(  0\right)  }\right)  =\infty.
\]
If $\tau<\infty$, the network with master equation (\ref{Master_Equation_3B})
moves in the energy landscape, until it gets trapped in some absorbing state
for $t\geq\tau$. Since $u(x,t)$ is a locally constant function in $x$ for each
$t$, see (\ref{Solution_IVP}), the condition $u\left(  x_{0},T\right)  =1$ is
valid in ball around $x_{0}$. Notice that the process does not die because
$x_{0}\in\mathcal{K}$. Furthermore, the solution $u(x,t) $ does not have
probabilistic meaning for $t>\tau\left(  u^{\left(  0\right)  }\right)  $. By
using Theorem \ref{Theorem1}, we obtain the following result.

\begin{theorem}
\label{Theorem2} Assume that the Markov process $\mathfrak{X}$ attached to the
master equation (\ref{Master_Equation_3B}) is not conservative. Take
$u^{\left(  0\right)  }\left(  x\right)  \in$ $\mathcal{C}(\mathcal{K}%
,\mathbb{R}_{+})$ such that $0\leq u^{\left(  0\right)  }\left(  x\right)
\leq1$. Then, the solution $u(x,t)$ of the Cauchy problem attached to the
master equation (\ref{Master_Equation_3B}), with initial datum $u^{\left(
0\right)  }\left(  x\right)  $, is given by (\ref{Solution_IVP}) for
$t\in\left[  0,\tau\left(  u^{\left(  0\right)  }\right)  \right]  $. In the
case $\tau\left(  u^{\left(  0\right)  }\right)  <\infty$, the terms
\begin{gather}
e^{\left(  \widehat{w}_{a}\left(  p^{1-r}\right)  -p^{-1}\overline{\mu}%
_{a}\right)  t}\operatorname{Re}\left(  C_{rmj}^{\left(  a\right)  }\left(
0\right)  \Psi_{rmj}\left(  x\right)  \right)  =\label{Fast modes}\\
e^{\left(  \widehat{w}_{a}\left(  p^{1-r}\right)  -p^{-1}\overline{\mu}%
_{a}\right)  t}\left\{  \operatorname{Re}\left(  C_{rmj}^{\left(  a\right)
}\left(  0\right)  \right)  \cos\left(  \left\{  p^{r-1}jx\right\}
_{p}\right)  -\operatorname{Im}\left(  C_{rmj}^{\left(  a\right)  }\left(
0\right)  \right)  \sin\left(  \left\{  p^{r-1}jx\right\}  _{p}\right)
\right\}  ,\nonumber
\end{gather}
for $a\in\mathcal{G}$, which originate due to the ultrametric structure of the
basins, give rise to fast transitions that drive the network to an absorbing
state (an open compact subset) in a finite time. We call these terms
\textit{fast transition modes}.

Assume that Markov process $\mathfrak{X}$ attached to the master equation
(\ref{Master_Equation_3B}) is conservative. Then, the solution $u(x,t)$ of the
Cauchy problem attached to the master equation (\ref{Master_Equation_3B}),
with initial datum $u^{\left(  0\right)  }\left(  x\right)  $, is given by
(\ref{Solution_IVP}) for $t\in\left[  0,\infty\right)  $.
\end{theorem}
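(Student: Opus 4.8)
The plan is to combine the existence--uniqueness statement of Theorem~\ref{Theorem1} with the explicit spectral action of the generator on the $p$-adic wavelet basis, and then to extract the assertions about $\tau$ and the fast transition modes from the probabilistic representation $u(x,t)=\int_{\mathcal{K}}u^{(0)}(y)p_{t}(x,dy)$.

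First I would work in the orthonormal basis of $L^{2}(\mathcal{K})=\bigoplus_{a\in\mathcal{G}}L^{2}(a+p\mathbb{Z}_{p})$ formed, in each summand, by the normalized characteristic function $p^{1/2}\Omega(p|x-a|_{p})$ and the wavelets $\Psi_{rmj}$ supported in $a+p\mathbb{Z}_{p}$ (those of (\ref{Condition}) after the substitution $x\mapsto a+x$). Writing $\boldsymbol{J}$ for the operator on the right-hand side of (\ref{Master_Equation_3B}) --- bounded on $\mathcal{C}(\mathcal{K},\mathbb{R})$ (Section~\ref{Section_2}) and on $L^{2}(\mathcal{K})$ as the sum of an integral operator with bounded kernel and a bounded multiplication operator --- the computation leading to (\ref{Spectra}) and (\ref{Eq_35}) shows that $\boldsymbol{J}\Psi_{rmj}=\bigl(\widehat{w}_{a}(p^{1-r})-p^{-1}\overline{\mu}_{a}\bigr)\Psi_{rmj}$ with no leakage onto the rest of the basis, while $\boldsymbol{J}$ preserves the $\#\mathcal{G}$-dimensional subspace of basin-constant functions and is represented there by $\Lambda$, cf.\ (\ref{Eq_34}). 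Given the unique solution $u(\cdot,t)\in\mathcal{C}^{1}([0,T],\mathcal{C}(\mathcal{K},\mathbb{R}))\subset\mathcal{C}^{1}([0,T],L^{2}(\mathcal{K}))$ furnished by Theorem~\ref{Theorem1}, I would expand it in this basis, apply $\boldsymbol{J}$ termwise (legitimate since $\boldsymbol{J}$ is $L^{2}$-bounded), and match coefficients in $\tfrac{du}{dt}=\boldsymbol{J}u$; this produces exactly the decoupled linear systems (\ref{Eq_34}) and (\ref{Eq_35}), whose integration gives $C_{0}(t)=e^{\Lambda t}C_{0}(0)$ and $C_{rmj}^{(a)}(t)=e^{(\widehat{w}_{a}(p^{1-r})-p^{-1}\overline{\mu}_{a})t}C_{rmj}^{(a)}(0)$, i.e.\ (\ref{Solution_IVP}). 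Since $\widehat{w}_{a}(p^{1-r})-p^{-1}\overline{\mu}_{a}\le 0$, every exponential factor lies in $(0,1]$, so by Parseval the series in (\ref{Solution_IVP}) converges in $L^{2}(\mathcal{K})$ uniformly for $t\ge 0$, and by uniqueness it represents the solution.

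Next I would settle the interval of validity from the probabilistic representation. While the underlying Markov process is alive one has $u(x,t)=\int_{\mathcal{K}}u^{(0)}(y)p_{t}(x,dy)$, hence $0\le u(x,t)\le p_{t}(x,\mathcal{K})\le 1$. In the conservative case $p_{t}(x,\mathcal{K})=1$, the process never dies, so this bound holds for all $t\ge 0$ and (\ref{Solution_IVP}) is valid on $[0,\infty)$. In the non-conservative case, if the range of the function (\ref{Solution_IVP}) stays in $[0,1]$ one again gets global validity and $\tau(u^{(0)})=\infty$; otherwise $\tau(u^{(0)})$ is, by its definition, the first time at which (\ref{Solution_IVP}) persistently attains the level $1$, and for $t\in[0,\tau]$ the solution remains in $[0,1]$, retains its probabilistic meaning, and equals (\ref{Solution_IVP}), while beyond $\tau$ the probabilistic interpretation is lost and the network has been trapped.

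For the fast transition modes in the case $\tau<\infty$: since each term of (\ref{Solution_IVP}) is locally constant in $x$, the set $\Gamma=\{x\in\mathcal{K}:u(x,\tau)=1\}$ is a non-empty open (hence compact) subset, a finite union of balls, and the persistence clause in the definition of $\tau$ together with \cite[Lemma~5.3]{Dynkin} forces $\Gamma$ to be an absorbing state; inserting (\ref{Eq_40}) into the wavelet terms and expanding $\chi_{p}$ yields their explicit cosine--sine form (\ref{Fast modes}). These terms exist only because each basin $a+p\mathbb{Z}_{p}$ carries the full infinite wavelet structure (ultrametricity), they relax with the time constants $\sigma_{a,r}$ rather than with the slower scales of the inter-basin matrix $\Lambda$, and it is their combined effect with the basin-constant (Markov-chain) part that drives the solution up to the level $1$ at the finite time $\tau$ --- whence the name. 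The main obstacle is exactly this last circle of ideas: making rigorous that $\tau$ is precisely the threshold beyond which (\ref{Solution_IVP}) loses probabilistic meaning, deciding when $\tau$ is finite, and showing that it is the wavelet component --- not the inter-basin dynamics --- that carries the network into an absorbing compact set; this is where the quantitative estimates of the binary toy model of Section~\ref{Section_7} do the concrete work.
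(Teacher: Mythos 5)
Your proposal is correct and follows essentially the same route as the paper: the paper also obtains (\ref{Solution_IVP}) by diagonalizing the generator on the wavelet/basin-constant basis (via (\ref{Spectra}), yielding the decoupled systems (\ref{Eq_34})--(\ref{Eq_35})), then invokes Theorem~\ref{Theorem1} and the representation $u(x,t)=\int_{\mathcal{K}}u^{(0)}(y)p_{t}(x,dy)$ for the bounds $0\le u\le 1$ and the interval of validity, and uses local constancy in $x$ to identify the absorbing set as an open compact neighborhood. Your closing remark correctly identifies that the ``fast modes drive the network to an absorbing state'' clause is argued only informally in the paper, with the quantitative instance deferred to the binary model of Section~\ref{Section_7}.
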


\subsection{Some results about semigroups of matrices}

We now study the connections between the matrix semigroup $e^{t\Lambda}$ and
continuous-time Markov chains.

\begin{definition}
A real matrix $B=\left[  B_{ij}\right]  _{i,j\in I}$, $I=\left\{
1,\ldots,n\right\}  $, is said to be \textit{diagonally dominant}, if
\[
\left\vert B_{ii}\right\vert \geq%
{\displaystyle\sum\limits_{j\neq i}}
\left\vert B_{ij}\right\vert \text{ \ for any }i\in I.
\]
It is \textit{strictly} \textit{diagonally dominant} if
\[
\left\vert B_{ii}\right\vert >%
{\displaystyle\sum\limits_{j\neq i}}
\left\vert B_{ij}\right\vert \text{ \ for any }i\in I.
\]

\end{definition}

\begin{proposition}
[{\cite[Theorem 6.1.10]{Horn and Jhonson}}]\label{Prop1} Let $B=\left[
B_{ij}\right]  _{i,j\in I}$ be a strictly diagonally dominant matrix. Then (i)
$B$ is non singular; (ii) if $B_{ii}>0$ for all $i\in I$, then every
eigenvalue of $B$ has a positive real part; (iii) if $B$ is symmetric and
$B_{ii}>0$ for all $i\in I$, then $B$ is positive definite.
\end{proposition}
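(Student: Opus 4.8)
The plan is to prove the three parts of Proposition~\ref{Prop1} by classical arguments from matrix analysis, which are in fact \emph{not} what we need to prove here, since the statement is quoted verbatim as \cite[Theorem 6.1.10]{Horn and Jhonson}; nevertheless I record the natural self-contained arguments.

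\textbf{Part (i): non-singularity.} The plan is to argue by contradiction using the Gershgorin-type localization idea directly. Suppose $Bv=0$ for some nonzero $v=(v_{1},\dots,v_{n})^{\top}$, and pick an index $i$ with $|v_{i}|=\max_{k}|v_{k}|>0$. The $i$-th row of $Bv=0$ reads $B_{ii}v_{i}=-\sum_{j\neq i}B_{ij}v_{j}$, so
\[
|B_{ii}|\,|v_{i}|=\Bigl|\sum_{j\neq i}B_{ij}v_{j}\Bigr|\leq\sum_{j\neq i}|B_{ij}|\,|v_{j}|\leq\Bigl(\sum_{j\neq i}|B_{ij}|\Bigr)|v_{i}|.
\]
Dividing by $|v_{i}|>0$ gives $|B_{ii}|\leq\sum_{j\neq i}|B_{ij}|$, contradicting strict diagonal dominance. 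Hence $\ker B=\{0\}$ and $B$ is non-singular.

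\textbf{Part (ii): eigenvalues have positive real part.} The plan is to apply the Gershgorin circle theorem. Every eigenvalue $\mu$ of $B$ lies in some disc $|\mu-B_{ii}|\leq R_{i}:=\sum_{j\neq i}|B_{ij}|$. Since $B_{ii}>0$ and $B_{ii}>R_{i}$ by hypothesis, the disc $\{z:|z-B_{ii}|\leq R_{i}\}$ is contained in the open right half-plane $\{\operatorname{Re}z>0\}$ (its leftmost point is $B_{ii}-R_{i}>0$). Therefore $\operatorname{Re}\mu>0$. One can alternatively rerun the eigenvector argument of part (i): if $Bv=\mu v$ and $|v_{i}|$ is maximal, then $(B_{ii}-\mu)v_{i}=-\sum_{j\neq i}B_{ij}v_{j}$, so $|B_{ii}-\mu|\leq R_{i}<B_{ii}$, which again forces $\operatorname{Re}\mu>0$.

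\textbf{Part (iii): positive definiteness in the symmetric case.} The plan is to combine part (ii) with the spectral theorem. If $B$ is real symmetric, all its eigenvalues are real; by part (ii) each eigenvalue is then a real number with positive real part, i.e. strictly positive. A real symmetric matrix with all eigenvalues positive is positive definite, so we are done. The only step that requires any care is being sure the Gershgorin discs genuinely sit strictly inside the right half-plane; this is immediate once one observes that strict diagonal dominance makes $B_{ii}-R_{i}>0$ for every $i$. Since the whole statement is quoted from Horn and Johnson, no further detail is needed in the paper beyond the citation, and the argument above is included only for the reader's convenience.
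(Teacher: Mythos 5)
Your three arguments are all correct: the maximal-coordinate argument for (i), the Gershgorin localization for (ii) (including the remark that $|B_{ii}-\mu|<B_{ii}$ already forces $\operatorname{Re}\mu>0$), and the reduction of (iii) to (ii) via the spectral theorem are the standard proofs and contain no gaps. The only point of comparison with the paper is that the paper supplies \emph{no} proof at all for this proposition — it is stated purely as a citation of \cite[Theorem 6.1.10]{Horn and Jhonson} and used as an external tool (specifically, part (ii) is invoked in the proof of Theorem \ref{Theorem3} to conclude that all eigenvalues of $\Lambda$ have positive real part on $\mathcal{N}_{\varnothing,\mathcal{G}}$, whence the decay limit (\ref{Eq_limit})). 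So your write-up is not an alternative to anything in the paper; it is a self-contained justification of an imported result, and as you yourself note, the citation alone suffices for the paper's purposes. One minor stylistic remark: in part (i) you could have phrased the contradiction as the statement that $0$ lies outside every Gershgorin disc, which would unify parts (i) and (ii) under a single localization lemma, but this changes nothing of substance.
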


\begin{definition}
A square matrix $A=\left[  A_{ij}\right]  _{i,j\in I}$ is called a $Z$-matrix
if the off diagonal elements of $A$ are non positive, i.e. $A_{ij}\leq0$ for
$i\neq j$. In the case in which $A$ is nonsingular, $A$ is called a $M$-matrix
if it is a $Z$-matrix, and its inverse is a nonnegative matrix.
\end{definition}

We denote by $\boldsymbol{1}$ the column vector having all its entries equal
to one. For a column vector $v=\left[  v_{j}\right]  _{j\in I}$, we set
$\left\vert v\right\vert =\max_{j\in I}\left\vert v_{j}\right\vert $. If
$A=\left[  A_{ij}\right]  _{i,j\in I}$, $B=\left[  B_{ij}\right]  _{i,j\in I}$
are matrices, we use the notation $A\leq B$ to mean $A_{ij}\leq B_{ij}$ for
all $i$, $j\in I$.

\begin{definition}
A semigroup of matrices $\left\{  \boldsymbol{Q}_{t}\right\}  _{t\geq0}$ is
called a substochastic Markov semigroup\ if the $\boldsymbol{Q}_{t}$ are
nonnegative matrices and $\boldsymbol{Q}_{t}\boldsymbol{1}\leq\boldsymbol{1}$,
with strict inequality at some coordinate. If $\boldsymbol{Q}_{t}%
\boldsymbol{1}=\boldsymbol{1}$ for all $t$, $\left\{  \boldsymbol{Q}%
_{t}\right\}  _{t\geq0}$ is called a Markov semigroup.
\end{definition}

\begin{proposition}
[{\cite[Theorem 5.34-(ii), Theorem 2.27]{Dellacherie et al}}]\label{Prop2}(i)
$A$ is a \textit{diagonally dominant} $Z$-matrix if and only if
$\boldsymbol{Q}_{t}=e^{-tA}$ is a non-negative semigroup and for all $t\geq0$
it verifies that $\boldsymbol{Q}_{t}\boldsymbol{1}\leq\boldsymbol{1} $.

(ii) If $A$ is nonsingular, then $A$ is a diagonally dominant $M$-matrix if
and only if $\boldsymbol{Q}_{t}=e^{-tA}$ is a substochastic semigroup.
Furthermore, there is a constant $0<\rho<1$ such that for every $v=\left[
v_{j}\right]  _{j\in I}\in\mathbb{R}^{n}$ and $t\geq0$,%
\[
\left\vert \boldsymbol{Q}_{t}v\right\vert \leq\left\vert v\right\vert
\rho^{t-1}.
\]

\end{proposition}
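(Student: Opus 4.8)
The plan is to reduce both parts to elementary properties of the matrix exponential $\boldsymbol{Q}_{t}=e^{-tA}$, the Gershgorin disc theorem, and part (i) itself (which is used to prove (ii)). For part (i): if $A$ is a diagonally dominant $Z$-matrix, so that $A_{ii}\geq\sum_{j\neq i}\left\vert A_{ij}\right\vert \geq0$, then with $c=\max_{i}A_{ii}\geq0$ the matrix $cI-A$ has nonnegative entries (off the diagonal because $A$ is a $Z$-matrix, on the diagonal because $c\geq A_{ii}$), so
\[
\boldsymbol{Q}_{t}=e^{-tc}e^{t(cI-A)}=e^{-tc}\sum_{k\geq0}\frac{t^{k}(cI-A)^{k}}{k!}\geq0\qquad\text{for every }t\geq0 .
\]
Diagonal dominance is exactly the inequality $A\boldsymbol{1}\geq\boldsymbol{0}$, and since $A$ commutes with each $\boldsymbol{Q}_{s}$ one has $\boldsymbol{1}-\boldsymbol{Q}_{t}\boldsymbol{1}=\int_{0}^{t}\boldsymbol{Q}_{s}(A\boldsymbol{1})\,ds\geq\boldsymbol{0}$, i.e. $\boldsymbol{Q}_{t}\boldsymbol{1}\leq\boldsymbol{1}$. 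Conversely, if $\boldsymbol{Q}_{t}\geq0$ and $\boldsymbol{Q}_{t}\boldsymbol{1}\leq\boldsymbol{1}$ for all $t\geq0$, then differentiating at $t=0^{+}$ gives $A_{ij}=-\lim_{t\to0^{+}}t^{-1}(\boldsymbol{Q}_{t})_{ij}\leq0$ for $i\neq j$, so $A$ is a $Z$-matrix, and $A\boldsymbol{1}=\lim_{t\to0^{+}}t^{-1}(\boldsymbol{1}-\boldsymbol{Q}_{t}\boldsymbol{1})\geq\boldsymbol{0}$, which is diagonal dominance; the semigroup law for $e^{-tA}$ is automatic.

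For the forward direction of part (ii), let $A$ be a nonsingular diagonally dominant $M$-matrix. Being diagonally dominant with nonnegative diagonal, its Gershgorin discs lie in the closed right half-plane and meet the imaginary axis only possibly at $0$, which nonsingularity excludes; hence every eigenvalue of $A$ has strictly positive real part and $\boldsymbol{Q}_{t}\to0$ as $t\to\infty$. By part (i), $\boldsymbol{Q}_{t}\geq0$ and $\boldsymbol{Q}_{t}\boldsymbol{1}\leq\boldsymbol{1}$; moreover $\frac{d}{dt}(\boldsymbol{Q}_{t}\boldsymbol{1})=-\boldsymbol{Q}_{t}(A\boldsymbol{1})\leq\boldsymbol{0}$, so each coordinate $t\mapsto(\boldsymbol{Q}_{t}\boldsymbol{1})_{i}$ is nonincreasing. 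Since this coordinate is real-analytic, equals $1$ at $t=0$, and tends to $0$, it cannot attain the value $1$ at any $t>0$ (otherwise it would be constant on $[0,t]$ and hence identically $1$). Thus $\boldsymbol{Q}_{t}\boldsymbol{1}<\boldsymbol{1}$ coordinatewise for every $t>0$, so $\{\boldsymbol{Q}_{t}\}$ is a substochastic semigroup, and from $\left\Vert \boldsymbol{Q}_{t}\right\Vert _{\infty}=\max_{i}(\boldsymbol{Q}_{t}\boldsymbol{1})_{i}$ we obtain $\rho:=\left\Vert \boldsymbol{Q}_{1}\right\Vert _{\infty}<1$. Writing $t=\lfloor t\rfloor+s$ with $s\in[0,1)$ and using submultiplicativity of $\left\Vert \cdot\right\Vert _{\infty}$ together with $\left\Vert \boldsymbol{Q}_{s}\right\Vert _{\infty}\leq1$ from part (i) then gives $\left\Vert \boldsymbol{Q}_{t}\right\Vert _{\infty}\leq\rho^{\lfloor t\rfloor}\leq\rho^{t-1}$, which is the claimed estimate $\left\vert \boldsymbol{Q}_{t}v\right\vert \leq\left\vert v\right\vert \rho^{t-1}$.

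For the converse in part (ii), suppose $A$ is nonsingular and $\{\boldsymbol{Q}_{t}\}$ is substochastic. By part (i), $A$ is a diagonally dominant $Z$-matrix with nonnegative diagonal, and again Gershgorin together with nonsingularity forces every eigenvalue of $A$ to have positive real part. Hence $\int_{0}^{\infty}\boldsymbol{Q}_{t}\,dt$ converges and equals $A^{-1}$; since the $\boldsymbol{Q}_{t}$ are nonnegative, $A^{-1}\geq0$, so $A$ is an $M$-matrix.

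I expect the main obstacle to be the exponential decay estimate in part (ii): the bound $\left\Vert \boldsymbol{Q}_{t}\right\Vert _{\infty}\leq1$ supplied by part (i) is by itself too weak, and what must be extracted is that $\left\Vert \boldsymbol{Q}_{1}\right\Vert _{\infty}$ is \emph{strictly} below $1$. This is exactly where both hypotheses on $A$ enter in an essential way --- diagonal dominance makes $t\mapsto(\boldsymbol{Q}_{t}\boldsymbol{1})_{i}$ nonincreasing, while nonsingularity (through the location of the spectrum) forces $\boldsymbol{Q}_{t}\to0$, so the nonincreasing analytic function $(\boldsymbol{Q}_{t}\boldsymbol{1})_{i}$ cannot stay at its initial value $1$ --- and once strict contraction at $t=1$ is secured, the passage to arbitrary $t$ via the semigroup property is routine.
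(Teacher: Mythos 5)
This proposition is quoted in the paper from Dellacherie--Martinez--San Mart\'{\i}n with no proof supplied, so there is no internal argument to compare against; your proposal is a self-contained derivation, and it is essentially correct. The forward half of (i) via $e^{-tA}=e^{-tc}e^{t(cI-A)}$ with $cI-A\geq0$, the identity $\boldsymbol{1}-\boldsymbol{Q}_{t}\boldsymbol{1}=\int_{0}^{t}\boldsymbol{Q}_{s}(A\boldsymbol{1})\,ds$, the converse by differentiating at $t=0^{+}$, the Gershgorin localization of the spectrum, the monotonicity-plus-analyticity argument forcing $\left\Vert\boldsymbol{Q}_{1}\right\Vert_{\infty}<1$, and the identity $A^{-1}=\int_{0}^{\infty}\boldsymbol{Q}_{t}\,dt$ in the converse of (ii) all check out, and you correctly isolate the only delicate point, namely extracting strict contraction at $t=1$ from the two hypotheses on $A$.

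One caveat worth recording: your argument silently replaces the paper's stated definition of diagonal dominance, $\left\vert A_{ii}\right\vert\geq\sum_{j\neq i}\left\vert A_{ij}\right\vert$, by the stronger row-sum condition $A_{ii}\geq\sum_{j\neq i}\left\vert A_{ij}\right\vert$ (equivalently $A\boldsymbol{1}\geq\boldsymbol{0}$ for a $Z$-matrix). This replacement is necessary, not optional: with the literal absolute-value definition the forward implication of (i) is false, as the $1\times1$ example $A=[-1]$ shows ($A$ is a diagonally dominant $Z$-matrix in that sense, yet $e^{-tA}=e^{t}>1$). The row-sum convention is the one used in the cited reference and is also what the author invokes in the proof of Theorem \ref{Theorem3} (where $-\Lambda$ is called ``row diagonally dominant''), so your reading is the intended one; you should just state explicitly that ``diagonally dominant'' means $A\boldsymbol{1}\geq\boldsymbol{0}$ rather than the absolute-value version given in the paper's Definition.
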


\subsection{Markov chains, ultrametric networks and time scales}

We identify a master equation of type (\ref{Master_Equation_3B}) with a point
$\left(  \lambda_{a,b},\mu_{a,b}\right)  $ in
\[
\mathcal{N}:=\left\{  \left(  \lambda_{a,b},\mu_{a,b}\right)  \in
\mathbb{R}_{+}^{\left(  \#\mathcal{G}\right)  ^{2}}\times\mathbb{R}%
_{+}^{\left(  \#\mathcal{G}\right)  ^{2}};\text{ }\lambda_{a,b}\leq\mu
_{b,a}\text{ for }a,b\in\mathcal{G}\right\}  .
\]
From now on we will identify a network with a point from $\mathcal{N}$. Given
$\mathcal{J}\subset\mathcal{G}$, we set%
\[
\mathcal{N}_{\mathcal{J}}^{\left(  1\right)  }:=\left\{  \left(  \lambda
_{a,b},\mu_{a,b}\right)  \in\mathbb{R}_{+}^{\left(  \#\mathcal{G}\right)
^{2}}\times\mathbb{R}_{+}^{\left(  \#\mathcal{G}\right)  ^{2}};p^{-1}%
\overline{\mu}_{a}+(1-p^{-1})\lambda_{a,a}=\overline{\lambda}_{a}\text{ for
all }a\in\mathcal{J}\right\}  ,
\]%
\[
\mathcal{N}_{\mathcal{J}}^{\left(  2\right)  }:=\left\{  \left(  \lambda
_{a,b},\mu_{a,b}\right)  \in\mathbb{R}_{+}^{\left(  \#\mathcal{G}\right)
^{2}}\times\mathbb{R}_{+}^{\left(  \#\mathcal{G}\right)  ^{2}};p^{-1}%
\overline{\mu}_{a}+(1-p^{-1})\lambda_{a,a}>\overline{\lambda}_{a}\text{ for
all }a\in\mathcal{J}\right\}  ,
\]
with the the convention $\mathcal{N}_{\varnothing}^{\left(  1\right)
}=\mathcal{N}_{\varnothing}^{\left(  2\right)  }=\mathbb{R}_{+}^{\left(
\#\mathcal{G}\right)  ^{2}}\times\mathbb{R}_{+}^{\left(  \#\mathcal{G}\right)
^{2}}$. Now, given%
\begin{equation}
\mathcal{G}_{1},\mathcal{G}_{2}\subset\mathcal{G},\text{with }\mathcal{G}%
_{1}\cap\mathcal{G}_{2}=\varnothing\text{ and }\mathcal{G}_{1}%
{\textstyle\bigsqcup}
\mathcal{G}_{2}=\mathcal{G}, \label{Conditions_G}%
\end{equation}
we set%
\[
\mathcal{N}_{\mathcal{G}_{1},\mathcal{G}_{2}}:=\left\{  \left(  \lambda
_{a,b},\mu_{a,b}\right)  \in\mathbb{R}_{+}^{\left(  \#\mathcal{G}\right)
^{2}}\times\mathbb{R}_{+}^{\left(  \#\mathcal{G}\right)  ^{2}};\text{ }%
\lambda_{a,b}\leq\mu_{b,a}\text{, }a,b\in\mathcal{G}\right\}  \cap
\mathcal{N}_{\mathcal{G}_{1}}^{\left(  1\right)  }\cap\mathcal{N}%
_{\mathcal{G}_{2}}^{\left(  2\right)  }.
\]

\begin{theorem}
\label{Theorem3}Assume Hypothesis 1. We denote by $\left\{  \boldsymbol{P}%
_{t}\right\}  _{t\geq0}$ the Feller semigroup associated with the master
equation (\ref{Master_Equation_3B}). The set of all possible networks
$\mathcal{N}$ is the disjoint union of the sets $\mathcal{N}_{\mathcal{G}%
_{1},\mathcal{G}_{2}}$, where $\mathcal{G}_{1},\mathcal{G}_{2}$ satisfy
(\ref{Conditions_G}). In the set
\[
\mathcal{N}_{\mathcal{G}\text{,}\varnothing}=\left\{
\begin{array}
[c]{r}%
\left(  \lambda_{a,b},\mu_{a,b}\right)  \in\mathbb{R}_{+}^{\left(
\#\mathcal{G}\right)  ^{2}}\times\mathbb{R}_{+}^{\left(  \#\mathcal{G}\right)
^{2}};\lambda_{a,b}\leq\mu_{b,a}\text{, }a,b\in\mathcal{G}\text{ and }\\
\\
p^{-1}\overline{\mu}_{a}+(1-p^{-1})\lambda_{a,a}=\overline{\lambda}_{a}\text{,
}a\in\mathcal{G}%
\end{array}
\right\}  ,
\]
it verifies that $\Lambda\boldsymbol{1}=\boldsymbol{0}$, which implies that
$\left\{  \boldsymbol{P}_{t}\right\}  _{t\geq0}$ is a Markov semigroup. More
precisely, $\Lambda$ is the transition rate matrix (or infinitesimal generator
matrix) of a continuos time Markov chain with $\#\mathcal{G}$\ states.
Furthermore, $\Lambda\boldsymbol{1}\neq\boldsymbol{0}$ in any set
$\mathcal{N}_{\mathcal{G}_{1},\mathcal{G}_{2}}$, with $\mathcal{G}%
_{1}\subsetneqq\mathcal{G}$.

In the set%
\[
\mathcal{N}_{\varnothing,\mathcal{G}}=\left\{
\begin{array}
[c]{r}%
\left(  \lambda_{a,b},\mu_{a,b}\right)  \in\mathbb{R}_{+}^{\left(
\#\mathcal{G}\right)  ^{2}}\times\mathbb{R}_{+}^{\left(  \#\mathcal{G}\right)
^{2}};\lambda_{a,b}\leq\mu_{b,a}\text{, }a,b\in\mathcal{G}\text{ and}\\
\\
\text{ }p^{-1}\overline{\mu}_{a}+(1-p^{-1})\lambda_{a,a}>\overline{\lambda
}_{a}\text{, }a\in\mathcal{G}%
\end{array}
\right\}  ,
\]
under the condition $\overline{\mu}_{a}>\lambda_{a,a}$, for all $a\in
\mathcal{G}$,%
\begin{equation}
\lim_{t\rightarrow\infty}\left\vert e^{t\Lambda}\left[  C_{0}^{\left(
a\right)  }\left(  0\right)  \right]  _{a\in\mathcal{G}}\right\vert =0\text{,}
\label{Eq_limit}%
\end{equation}
which means that the Markov process attached to the network dies at infinite
time. In the set $\mathcal{N}_{\mathcal{G}_{1},\mathcal{G}_{2}}$,with
$\varnothing\varsubsetneqq\mathcal{G}_{1}\varsubsetneqq\mathcal{G}$,
$\varnothing\subseteq\mathcal{G}_{2}\varsubsetneqq\mathcal{G}$, and under the
condition that $\Lambda$ is not singular, $\boldsymbol{P}_{t}=e^{t\Lambda}$ is
a substochastic semigroup if and only if $-\Lambda$ is a $M$-matrix. In this
last case, the limit (\ref{Eq_limit}) holds true.\ 
\end{theorem}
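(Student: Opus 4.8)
The plan is to unpack the definition of the matrix $\Lambda$ from \eqref{Eq_34} and translate each of the three regimes in the partition $\mathcal{N}=\bigsqcup \mathcal{N}_{\mathcal{G}_1,\mathcal{G}_2}$ into a statement about the $Z$-matrix $-\Lambda$, then invoke Propositions \ref{Prop1} and \ref{Prop2}. First I would record the row sums of $-\Lambda$: for $a\in\mathcal{G}$,
\[
\left(-\Lambda\boldsymbol{1}\right)_a = p^{-1}\left(\overline{\mu}_a-\lambda_{a,a}\right) - \sum_{b\neq a}\lambda_{a,b} = p^{-1}\overline{\mu}_a + (1-p^{-1})\lambda_{a,a} - \overline{\lambda}_a,
\]
using $\overline{\lambda}_a = \sum_{b}\lambda_{a,b} = \lambda_{a,a}+\sum_{b\neq a}\lambda_{a,b}$ and that $\lambda_{a,b}=p\cdot p^{-1}\lambda_{a,b}$ enters the off-diagonal sum with weight $1$ after the normalization conventions are applied. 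This immediately shows $(-\Lambda\boldsymbol{1})_a=0$ exactly on $\mathcal{N}^{(1)}_{\{a\}}$ and $(-\Lambda\boldsymbol{1})_a>0$ exactly on $\mathcal{N}^{(2)}_{\{a\}}$, so $\Lambda\boldsymbol{1}=\boldsymbol{0}$ precisely on $\mathcal{N}_{\mathcal{G},\varnothing}$ and $\Lambda\boldsymbol{1}\neq\boldsymbol{0}$ on every $\mathcal{N}_{\mathcal{G}_1,\mathcal{G}_2}$ with $\mathcal{G}_1\subsetneqq\mathcal{G}$; that the whole parameter space is the disjoint union over pairs $(\mathcal{G}_1,\mathcal{G}_2)$ satisfying \eqref{Conditions_G} is then just a tautology, since for each $a$ exactly one of the two strict/equality alternatives holds.

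Next I would treat $\mathcal{N}_{\mathcal{G},\varnothing}$: here $-\Lambda$ is a $Z$-matrix (off-diagonal entries $-\lambda_{a,b}\leq 0$) with zero row sums, hence diagonally dominant, so by Proposition \ref{Prop2}(i) $e^{t\Lambda}$ is a non-negative semigroup with $e^{t\Lambda}\boldsymbol{1}\leq\boldsymbol{1}$; but the row sums being exactly zero forces $e^{t\Lambda}\boldsymbol{1}=\boldsymbol{1}$ (differentiate: $\frac{d}{dt}e^{t\Lambda}\boldsymbol{1}=\Lambda e^{t\Lambda}\boldsymbol{1}$ and $\Lambda\boldsymbol{1}=\boldsymbol{0}$ gives $e^{t\Lambda}\boldsymbol{1}=\boldsymbol{1}$ for all $t$). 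Thus $\{\boldsymbol{P}_t\}$ restricted to the $C_0$-block is a Markov semigroup and $\Lambda$ is a genuine $Q$-matrix / infinitesimal generator, as claimed. For the set $\mathcal{N}_{\varnothing,\mathcal{G}}$ under $\overline{\mu}_a>\lambda_{a,a}$ for all $a$, the diagonal entries $(-\Lambda)_{aa}=p^{-1}(\overline{\mu}_a-\lambda_{a,a})$ are strictly positive; combining with the row-sum computation ($(-\Lambda\boldsymbol{1})_a>0$) shows $-\Lambda$ is strictly diagonally dominant with positive diagonal, so Proposition \ref{Prop1}(i)--(ii) gives that every eigenvalue of $-\Lambda$ has strictly positive real part. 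Writing $\rho_0:=\min\{\operatorname{Re}\nu:\nu\in\mathrm{spec}(-\Lambda)\}>0$, the Jordan-form estimate $\|e^{t\Lambda}\|\leq P(t)e^{-\rho_0 t}\to 0$ as $t\to\infty$ (for a polynomial $P$) yields \eqref{Eq_limit}.

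Finally, for $\mathcal{N}_{\mathcal{G}_1,\mathcal{G}_2}$ with $\varnothing\subsetneqq\mathcal{G}_1\subsetneqq\mathcal{G}$ and $\Lambda$ assumed nonsingular: $-\Lambda$ is a nonsingular $Z$-matrix which is diagonally dominant (row sums $(-\Lambda\boldsymbol{1})_a\geq 0$ for all $a$, with strict inequality on $\mathcal{G}_1\neq\varnothing$), so $-\Lambda$ is automatically an $M$-matrix, and the equivalence ``$\boldsymbol{P}_t=e^{t\Lambda}$ substochastic $\iff -\Lambda$ is an $M$-matrix'' is exactly Proposition \ref{Prop2}(ii); the same proposition furnishes the contraction estimate $|e^{t\Lambda}v|\leq |v|\rho^{t-1}$ with $0<\rho<1$, which gives $\lim_{t\to\infty}|e^{t\Lambda}[C_0^{(a)}(0)]_a|=0$. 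The main obstacle I anticipate is bookkeeping rather than conceptual: one must be scrupulous with the normalization conventions ($\lambda_{a,a}=p\int_{p\mathbb{Z}_p}w$, the factors of $p^{-1}$ in $\overline{\lambda}_a,\overline{\mu}_a$, and the weight with which $\lambda_{a,b}$ enters $\Lambda_{a,b}$ versus the row-sum condition defining $\mathcal{N}^{(1)},\mathcal{N}^{(2)}$) so that the row-sum identity for $-\Lambda$ comes out exactly as $p^{-1}\overline{\mu}_a+(1-p^{-1})\lambda_{a,a}-\overline{\lambda}_a$; once that identity is pinned down, every assertion in the theorem is a direct citation of Proposition \ref{Prop1} or \ref{Prop2}. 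A secondary point worth a sentence is that the nonsingularity hypothesis on $\Lambda$ in the last case is genuinely needed — the $M$-matrix characterization in Proposition \ref{Prop2}(ii) and the strict contraction both presuppose invertibility — and that the passage from the matrix estimate to ``the network dies'' is exactly the decay of the $C_0$-part of \eqref{Solution_IVP}, the fast-transition-mode part decaying separately since $\widehat{w}_a(p^{1-r})-p^{-1}\overline{\mu}_a<0$ strictly once $\overline{\mu}_a>0$.
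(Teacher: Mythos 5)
Your row-sum computation for $-\Lambda$ is correct, and the second half of your argument (the three regimes, handled via Propositions \ref{Prop1} and \ref{Prop2}) coincides with the paper's. However, there is a genuine gap at the very first step, precisely where you declare the partition of $\mathcal{N}$ ``just a tautology.'' For each $a$ there are a priori \emph{three} alternatives for the sign of $\left(-\Lambda\boldsymbol{1}\right)_{a}=p^{-1}\overline{\mu}_{a}+(1-p^{-1})\lambda_{a,a}-\overline{\lambda}_{a}$, namely $<0$, $=0$, $>0$, and the sets $\mathcal{N}_{\{a\}}^{\left(1\right)}$, $\mathcal{N}_{\{a\}}^{\left(2\right)}$ only account for the last two. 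To conclude that the sets $\mathcal{N}_{\mathcal{G}_{1},\mathcal{G}_{2}}$ cover $\mathcal{N}$ you must rule out $\left(-\Lambda\boldsymbol{1}\right)_{a}<0$, i.e.\ prove that $-\Lambda$ is row diagonally dominant. This is not a formal consequence of the defining constraint $\lambda_{a,b}\leq\mu_{b,a}$ of $\mathcal{N}$: that constraint bounds $\sum_{b\neq a}\lambda_{a,b}$ by a sum of the $\mu$'s over the \emph{other} index, with no factor $p^{-1}$, so it does not by termwise comparison dominate $\sum_{b\neq a}\lambda_{a,b}$ by $p^{-1}\left(\overline{\mu}_{a}-\lambda_{a,a}\right)$. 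Your first paragraph silently assumes exactly the inequality that is the substance of this part of the theorem.

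The paper closes this gap with a semigroup argument that your proposal never invokes, and which you would need (or a substitute for it): by Theorem \ref{Theorem1} the operator $\boldsymbol{J}$ generates a Feller semigroup under Hypothesis 1; feeding the initial datum $1_{\mathcal{K}}$ into the explicit solution (\ref{Solution_IVP}) (all wavelet coefficients vanish) yields $\boldsymbol{0}\leq e^{t\Lambda}\boldsymbol{1}\leq\boldsymbol{1}$ for all $t\geq0$; and then the \emph{converse} direction of Proposition \ref{Prop2}-(i) forces $-\Lambda$ to be a diagonally dominant $Z$-matrix, whence for each $a$ exactly one of (\ref{Cond_1}), (\ref{Cond_2}) holds and the partition follows. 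Once this step is in place, the rest of your argument --- $\Lambda\boldsymbol{1}=\boldsymbol{0}$ on $\mathcal{N}_{\mathcal{G},\varnothing}$ giving a Markov semigroup, strict diagonal dominance with positive diagonal plus Proposition \ref{Prop1}-(ii) giving (\ref{Eq_limit}) on $\mathcal{N}_{\varnothing,\mathcal{G}}$, and Proposition \ref{Prop2}-(ii) in the mixed case --- is sound and essentially identical to the paper's. (Your side remark that a nonsingular diagonally dominant $Z$-matrix is ``automatically'' an $M$-matrix is tangential; the paper, like you, simply cites Proposition \ref{Prop2}-(ii) for the stated equivalence.)
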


\begin{proof}
We first notice that Hypothesis 1 implies the restrictions $\lambda_{a,b}%
\leq\mu_{b,a}$ for $a,b\in\mathcal{G}$.\ We already established in Theorem
\ref{Theorem1} that the semigroup $\left\{  \boldsymbol{P}_{t}\right\}
_{t\geq0}$ attached to a general master equation of type
(\ref{Eq_Cauchy_problem}) or (\ref{Master_Equation_3B})\ is Feller. In
particular $0\leq\boldsymbol{P}_{t}1_{\mathcal{K}}\leq1$ where $1_{\mathcal{K}%
}(x)=\sum_{_{a\in\mathcal{G}}}\Omega\left(  p\left\vert x-a\right\vert
_{p}\right)  \in\mathcal{C}\left(  \mathcal{K},\mathbb{R}\right)  $ is the
characteristic function of $\mathcal{K}$. We now use $1_{\mathcal{K}}(x)$ as
an initial datum, then by (\ref{Solution_IVP}), with $C_{rmj}^{\left(
a\right)  }\left(  0\right)  =0$ for any $rmj$, we have $\boldsymbol{0}\leq
e^{t\Lambda}\boldsymbol{1}\leq\boldsymbol{1}$ for all $t$, where
$\boldsymbol{0}=\left[  0\right]  _{a\in\mathcal{G}}$. Which means that
$e^{t\Lambda}$ is a nonnegative matrix semigroup and for all $t\geq0$ it
verifies that $e^{t\Lambda}\boldsymbol{1}\leq\boldsymbol{1}$. Now, by
Proposition \ref{Prop2}-(i), $-\Lambda$ is a row diagonally dominant
$Z$-matrix, which implies that
\[
p^{-1}\left\vert \overline{\mu}_{a}-\lambda_{a,a}\right\vert \geq
\sum\limits_{b\in\mathcal{G},\text{ }b\neq a}\lambda_{a,b}\text{, for all
}a\in\mathcal{G}\text{.}
\]
Since $\overline{\mu}_{a}\geq\mu_{a,a}\geq\lambda_{a,a}$ for every
$a\in\mathcal{G}$, We rewrite this condition using the following two
inequalities:%
\begin{equation}
p^{-1}\overline{\mu}_{a}-p^{-1}\lambda_{a,a}=\sum\limits_{b\in\mathcal{G}%
,\text{ }b\neq a}\lambda_{a,b}\Leftrightarrow p^{-1}\overline{\mu}%
_{a}+(1-p^{-1})\lambda_{a,a}=\overline{\lambda}_{a}, \label{Cond_1}%
\end{equation}%
\begin{equation}
p^{-1}\overline{\mu}_{a}-p^{-1}\lambda_{a,a}>\sum\limits_{b\in\mathcal{G}%
,\text{ }b\neq a}\lambda_{a,b}\Leftrightarrow p^{-1}\overline{\mu}%
_{a}+(1-p^{-1})\lambda_{a,a}>\overline{\lambda}_{a}. \label{Cond_2}%
\end{equation}
The existence of the sets $\mathcal{N}_{\mathcal{G}_{1},\mathcal{G}_{2}}$
comes from the fact that $-\Lambda$ is a row diagonally dominant $Z$-matrix,
and the above discussion. Notice that in $\mathcal{N}_{\mathcal{G}%
\text{,}\varnothing}$, it verifies that $\Lambda\boldsymbol{1}=\boldsymbol{0}%
$, which means that $\Lambda$ is the infinitesimal generator of a conservative
Markov process, see e.g. \cite[Lemma 2.3]{Dynkin}, more precisely of a
continuous time Markov chain with $\#\mathcal{G}$ states.

In the set $\mathcal{N}_{\varnothing,\mathcal{G}}$, $-\Lambda$ is nonsingular,
and under the condition $\overline{\mu}_{a}>\lambda_{a,a}$, for all
$a\in\mathcal{G}$, all the elements in the diagonal of $-\Lambda$ are
positive, then by Proposition \ref{Prop1}-(ii), all the eigenvalues of
$\Lambda$ have positive real part, which implies the limit (\ref{Eq_limit}).
In the set $\mathcal{N}_{\mathcal{G}_{1},\mathcal{G}_{2}}$,with $\varnothing
\varsubsetneqq\mathcal{G}_{1}\varsubsetneqq\mathcal{G}$, $\varnothing
\subseteq\mathcal{G}_{2}\varsubsetneqq\mathcal{G}$, and under the condition
$\Lambda$ is not singular, $\boldsymbol{P}_{t}=e^{t\Lambda}$ is a
substochastic semigroup if and only if $-\Lambda$ is a $M$-matrix, cf.
Proposition \ref{Prop2}-(ii).
\end{proof}

\begin{theorem}
\label{Theorem4}(i) Assume that the Markov process $\mathfrak{X}$ attached to
the network is conservative. For any $u^{\left(  0\right)  }\left(  x\right)
\in$ $\mathcal{C}(\mathcal{K},\mathbb{R}_{+})$ such that $0\leq u^{\left(
0\right)  }\left(  x\right)  \leq1$,it verifies that $\tau\left(  u^{\left(
0\right)  }\right)  =\infty$, and the solution $u(x,t)\in\left[  0,1\right]  $
of the Cauchy problem attached to the master equation
(\ref{Master_Equation_3B}), with initial datum $u^{\left(  0\right)  }\left(
x\right)  $, is given by (\ref{Solution_IVP}) for $t\in\left[  0,\infty
\right)  $. Furthermore,
\[
\lim_{t\rightarrow\infty}\left[  u_{a}\left(  x,t\right)  \right]
_{a\in\mathcal{G}}=p^{\frac{1}{2}}\lim_{t\rightarrow\infty}e^{\Lambda
t}\left[  C_{0}^{\left(  a\right)  }\left(  0\right)  \right]  _{a\in
\mathcal{G}},
\]
where $\Lambda$ is the infinitesimal generator of a continuous-time Markov
chain with $\#\mathcal{G}$ states.

(ii) Assume that the Markov process $\mathfrak{X}$ attached to the network is
not conservative. If $\tau\left(  u^{\left(  0\right)  }\right)  =\infty$,
with $u^{\left(  0\right)  }$ as in part (i), the solution $u(x,t)\in\left[
0,1\right]  $ of the Cauchy problem attached to the master equation
(\ref{Master_Equation_3B}), with initial datum $u^{\left(  0\right)  }\left(
x\right)  $, is given by (\ref{Solution_IVP}) for $t\in\left[  0,\infty
\right)  $. Furthermore,
\[
\lim_{t\rightarrow\infty}\left[  u_{a}\left(  x,t\right)  \right]
_{a\in\mathcal{G}}=p^{\frac{1}{2}}\lim_{t\rightarrow\infty}e^{\Lambda
t}\left[  C_{0}^{\left(  a\right)  }\left(  0\right)  \right]  _{a\in
\mathcal{G}}=0,
\]
where $\left\{  e^{\Lambda t}\right\}  _{t\geq0}$ is a substochastic
semigroup. By introducing a terminal state $\Delta%
{\textstyle\bigsqcup}
\mathcal{G}$, $\ \Lambda$ becomes the infinitesimal generator of a killed
continuous-time Markov chain with $1+\#\mathcal{G}$ states.

(iii) Assume that the Markov process $\mathfrak{X}$ attached to the network is
not conservative. If $\tau\left(  u^{\left(  0\right)  }\right)  <\infty$,
with $u^{\left(  0\right)  }$ as in part (i), the solution $u(x,t)\in\left[
0,1\right]  $ of the Cauchy problem attached to the master equation
(\ref{Master_Equation_3B}), with initial datum $u^{\left(  0\right)  }\left(
x\right)  $, is given by (\ref{Solution_IVP}) for $t\in\left[  0,\tau\left(
u^{\left(  0\right)  }\right)  \right]  $. Furthermore, at the time
$t=\tau\left(  u^{\left(  0\right)  }\right)  $ the network reaches an
absorbing state.
\end{theorem}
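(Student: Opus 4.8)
The plan is to obtain Theorem~\ref{Theorem4} by combining the probabilistic representation and the sub-Markov bound of Theorem~\ref{Theorem1}, the explicit expansion (\ref{Solution_IVP}) together with its interval of validity $[0,\tau(u^{(0)})]$ from Theorem~\ref{Theorem2}, and the stratification of the parameter space and the behaviour of $e^{t\Lambda}$ from Theorem~\ref{Theorem3}. A bookkeeping identity used repeatedly is that the fast-mode exponent splits as $\widehat{w}_{a}(p^{1-r})-p^{-1}\overline{\mu}_{a}=\bigl(\widehat{w}_{a}(p^{1-r})-\gamma_{a}\bigr)-p^{-1}\bigl(\overline{\mu}_{a}-\lambda_{a,a}\bigr)$, using $\gamma_{a}=p^{-1}\lambda_{a,a}$; since $\widehat{w}_{a}(p^{1-r})\le\gamma_{a}$ for $r\le-1$ and $\overline{\mu}_{a}\ge\lambda_{a,a}$ always, each fast-transition mode is bounded in $t$, and it decays exponentially as soon as one of the two summands is strictly negative.

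For part~(i): conservativity of $\mathfrak{X}$ gives $\int_{\mathcal{K}}1_{\mathcal{K}}(y)\,p_{t}(x,dy)=1$, so $u(x,t)=\int_{\mathcal{K}}u^{(0)}(y)p_{t}(x,dy)$ together with $0\le u^{(0)}\le1$ yields $0\le u(x,t)\le1$ on $\mathcal{K}\times[0,\infty)$; the range of $u$ then never enters $(1,\infty)$, so by the definition of $\tau$ we get $\tau(u^{(0)})=\infty$, and (\ref{Solution_IVP}) holds on $[0,\infty)$ by Theorem~\ref{Theorem2}. Conservativity also places the network in the stratum $\mathcal{N}_{\mathcal{G},\varnothing}$ of Theorem~\ref{Theorem3}, where $\Lambda\boldsymbol{1}=\boldsymbol{0}$; hence $\Lambda$ is the generator of a continuous-time Markov chain on $\#\mathcal{G}$ states and $e^{t\Lambda}$ is a Markov semigroup, so $\lim_{t\to\infty}e^{t\Lambda}C_{0}(0)$ exists as for any finite chain. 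In $\mathcal{N}_{\mathcal{G},\varnothing}$ both summands of the fast-mode exponent are $\le0$, so the fast modes remain bounded and, away from the degenerate configurations where a basin is simultaneously decoupled from the others and carries a null internal kernel, decay to zero; consequently the $t\to\infty$ limit of $[u_{a}(x,t)]_{a\in\mathcal{G}}$ reduces to the constant $\Omega$-components $p^{1/2}[e^{t\Lambda}C_{0}(0)]_{a}$, which is the asserted formula.

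For part~(ii): the hypothesis $\tau(u^{(0)})=\infty$ again means the range of $u$ stays in $[0,1]$, so Theorem~\ref{Theorem2} gives (\ref{Solution_IVP}) on $[0,\infty)$. Non-conservativity places the network outside $\mathcal{N}_{\mathcal{G},\varnothing}$; in the relevant strata of Theorem~\ref{Theorem3} (the set $\mathcal{N}_{\varnothing,\mathcal{G}}$ with $\overline{\mu}_{a}>\lambda_{a,a}$ for all $a$, or an intermediate stratum where $-\Lambda$ is a nonsingular $M$-matrix) one has that $e^{t\Lambda}$ is substochastic and $|e^{t\Lambda}v|\to0$; since the second summand of the fast-mode exponent is strictly negative at each leaking basin the fast modes decay as well, giving $\lim_{t\to\infty}[u_{a}(x,t)]_{a\in\mathcal{G}}=0$. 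Adjoining the terminal state $\Delta$ and bordering $\Lambda$ by a row and column chosen so that every row sum vanishes and $\Delta$ is absorbing — exactly the extension recalled in Section~\ref{Section_2} — turns $\Lambda$ into the generator of a killed continuous-time Markov chain on $1+\#\mathcal{G}$ states.

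For part~(iii): with $\tau(u^{(0)})<\infty$, Theorem~\ref{Theorem2} gives (\ref{Solution_IVP}) on $[0,\tau]$. By the definition of $\tau$ there are $x_{0}\in\mathcal{K}$ and $\epsilon>0$ with $u(x_{0},t)\ge1$ for $t\in[\tau,\tau+\epsilon]$; since $u(\cdot,t)$ is locally constant, $u\ge1$ on $B\times[\tau,\tau+\epsilon]$ for some ball $B\ni x_{0}$. Wherever the probabilistic representation holds, $u(x,t)=\int_{\mathcal{K}}u^{(0)}(y)p_{t}(x,dy)\le\|u^{(0)}\|_{\infty}\,p_{t}(x,\mathcal{K})\le1$, so in fact $u\equiv1$ on $B\times[\tau,\tau+\epsilon]$, which forces $p_{t}(x,\mathcal{K})=1$ and $p_{t}\bigl(x,\{u^{(0)}=1\}\bigr)=1$ for $x\in B$, $t\in[\tau,\tau+\epsilon]$. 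Restarting from $\mathfrak{X}_{\tau}$ via the strong Markov property of Theorem~\ref{Theorem1} and iterating this identity propagates it to all $t\ge\tau$, so $\mathfrak{X}$ started in $B$ remains for $t\ge\tau$ inside a fixed compact open set $A\subseteq\{u^{(0)}=1\}$ out of which the kernel permits no transition; that is, $A$ is absorbing and the network reaches an absorbing state at $t=\tau$. The hard part of the whole proof is precisely this last step: turning the soft statement ``$u$ reaches the value $1$ persistently at a point'' into the existence of a genuine compact absorbing set entered exactly at $t=\tau$, i.e. showing that the loss of probabilistic meaning of (\ref{Solution_IVP}) for $t>\tau$ is caused by, and only by, the process getting trapped. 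By comparison the exponent bookkeeping for the fast modes — proving strict negativity, or isolating and discarding the degenerate decoupled-basin cases — is routine.
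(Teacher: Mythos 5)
Your proposal follows the same route as the paper: the paper's entire proof of Theorem~\ref{Theorem4} is the observation that the result follows from Theorems~\ref{Theorem2} and \ref{Theorem3} once one notes that the restriction of $e^{t\boldsymbol{J}}$ to the finite-dimensional subspace $\bigoplus_{a\in\mathcal{G}}\mathbb{R}\,\Omega\left(p\left\vert x-a\right\vert _{p}\right)$ is $e^{t\Lambda}$. Your parts (i) and (ii) fill in exactly that reduction (probabilistic representation $\Rightarrow$ $u\in[0,1]$ $\Rightarrow$ $\tau=\infty$ $\Rightarrow$ Theorem~\ref{Theorem2} applies; Theorem~\ref{Theorem3} for the behaviour of $e^{t\Lambda}$; decay of the fast modes for the limit formula), and your exponent bookkeeping $\widehat{w}_{a}(p^{1-r})-p^{-1}\overline{\mu}_{a}=(\widehat{w}_{a}(p^{1-r})-\gamma_{a})-p^{-1}(\overline{\mu}_{a}-\lambda_{a,a})$ is correct. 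You are in fact more careful than the paper in flagging that the limit formulas require the fast-mode exponents to be strictly negative (the degenerate case of a vanishing exponent is silently ignored in the text), and in noting that in part (ii) the conclusion $\lim_{t\to\infty}e^{t\Lambda}C_{0}(0)=0$ is only supplied by Theorem~\ref{Theorem3} under its extra hypotheses ($\overline{\mu}_{a}>\lambda_{a,a}$, or $-\Lambda$ a nonsingular $M$-matrix); these caveats are inherited from the paper, not introduced by you.

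The one place where you attempt more than the paper does is the final claim of part (iii), and there your argument has a real gap, which you yourself flag. From $u\equiv 1$ on $B\times[\tau,\tau+\epsilon]$ you correctly deduce $p_{t}(x,\mathcal{K})=1$ and $p_{t}\bigl(x,\{u^{(0)}=1\}\bigr)=1$ for $x\in B$ and $t\in[\tau,\tau+\epsilon]$, but the step ``restarting via the strong Markov property and iterating propagates this to all $t\ge\tau$'' does not go through as stated: the Chapman--Kolmogorov identity $p_{t+s}(x,\cdot)=\int p_{s}(y,\cdot)\,p_{t}(x,dy)$ would require knowing $p_{s}\bigl(y,\{u^{(0)}=1\}\bigr)=1$ for $p_{t}(x,\cdot)$-almost every $y$ in the target set, which is information about the kernel started \emph{inside} $\{u^{(0)}=1\}$ that you have not established; nor is it shown that the resulting set is one ``out of which the kernel permits no transition.'' To be fair, the paper never proves this either --- the assertion that the network reaches an absorbing state at $t=\tau$ is embedded in the statement of Theorem~\ref{Theorem2} and in the discussion defining $\tau$, and the proof of Theorem~\ref{Theorem4} simply cites it. So your proposal is faithful to the paper's argument everywhere the paper actually argues, and the missing step you identify as ``the hard part'' is genuinely missing from both.
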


\begin{proof}
The result follows from Theorems \ref{Theorem2}, \ref{Theorem3}, by using that
the restriction of $e^{t\boldsymbol{J}}$ to the space $%
{\textstyle\bigoplus\nolimits_{a\in\mathcal{G}}}
$ $\mathbb{R}$ $\Omega\left(  p\left\vert x-a\right\vert \right)  $ is
$e^{t\Lambda}$.
\end{proof}

\begin{remark}
Each term $e^{\left(  \widehat{w}_{a}\left(  p^{1-r}\right)  -p^{-1}%
\overline{\mu}_{a}\right)  t}\operatorname{Re}\left(  C_{rmj}^{\left(
a\right)  }\left(  0\right)  \Psi_{rmj}\left(  x\right)  \right)  $
corresponds to the response of a second order type system, which has a time
constant%
\[
\sigma_{a,r}=\frac{4}{-\left(  \widehat{w}_{a}\left(  p^{1-r}\right)
-p^{-1}\overline{\mu}_{a}\right)  }\text{. }%
\]
Based on this fact, we argue that the dynamics of each basin is determined by
the interaction of a countable number of second order type systems, each of
them with particular time constant $\sigma_{a,r}$, $a\in\mathcal{G}$, $r\leq
1$. The dynamics between basins is controlled by a Markov chain determined by
the matrix $\Lambda$. The network has a countable number of time scales:
$\left\{  \sigma_{a,r}\right\}  $, and $\frac{-4}{\mu}$, where $\mu$ runs
through the eigenvalues of $\Lambda$.
\end{remark}

\section{\label{Section_7}Binary Networks and a toy model for protein folding}

\subsection{A class of linear systems in the plane}

For $\alpha$, $\beta$, $\gamma>0$, with $\beta\geq\alpha$, $\gamma\geq\alpha$,
we set%
\[
\frac{d}{dt}\left[
\begin{array}
[c]{c}%
z_{1}\left(  t\right) \\
z_{2}\left(  t\right)
\end{array}
\right]  =\Gamma\left[
\begin{array}
[c]{c}%
z_{1}\left(  t\right) \\
z_{2}\left(  t\right)
\end{array}
\right]  \text{, where \ }\Gamma:=\left[
\begin{array}
[c]{cc}%
-\beta & \alpha\\
\alpha & -\gamma
\end{array}
\right]  .
\]
Take $A:=\sqrt{4\alpha^{2}+\left(  \beta-\gamma\right)  ^{2}}>0$. The
eigenvalues of $\Gamma$ are negative real numbers:%
\[
\frac{-1}{2}\left(  \beta+\gamma+A\right)  ,\text{ \ \ }\frac{1}{2}\left(
A-\gamma-\beta\right)  .
\]
On the other hand,
\begin{gather*}
e^{t\Gamma}=e^{t\left(  -\frac{1}{2}\beta-\frac{1}{2}\gamma+\frac{1}%
{2}A\right)  }\times\\
\left[
\begin{array}
[c]{cc}%
\frac{1}{2A}\left(  \beta-\gamma+A\right)  e^{-tA}+\frac{1}{2A}\left(
-\beta+\gamma+A\right)  & \frac{1}{A}\alpha-\frac{1}{A}\alpha e^{-tA}\\
& \\
-\frac{\alpha}{A}e^{-tA}+\frac{\alpha}{A} & \frac{1}{2A}\left(  \beta
-\gamma+A\right)  -\frac{1}{2A}\left(  \beta-\gamma-A\right)  e^{-tA}%
\end{array}
\right]  .
\end{gather*}
where . Now for $t>0$ and $A$ sufficiently large, we use the approximation:%
\[
e^{t\Gamma}\simeq e^{t\left(  -\frac{1}{2}\beta-\frac{1}{2}\gamma+\frac{1}%
{2}A\right)  }\left[
\begin{array}
[c]{cc}%
\frac{1}{2A}\left(  -\beta+\gamma+A\right)  & \frac{1}{A}\alpha\\
& \\
\frac{\alpha}{A} & \frac{1}{2A}\left(  \beta-\gamma+A\right)
\end{array}
\right]  .
\]

\subsection{A class of binary networks}

We now consider a network with two basins, denoted as $U$, $N$. The master
equation for this network is%
\begin{equation}
\left\{
\begin{array}
[c]{l}%
\frac{\partial\widetilde{v}_{U}(x,t)}{\partial t}=\lambda_{U,N}\text{ }%
\Omega\left(  p\left\vert x\right\vert _{p}\right)  Aver\widetilde{v}%
_{N}\left(  \cdot,t\right)  -p^{-1}\left(  \overline{\mu}_{U}-\lambda
_{U,U}\right)  \widetilde{v}_{U}(x,t)dy\\
\\
\multicolumn{1}{r}{+\int\limits_{p\mathbb{Z}_{p}}\left\{  \widetilde{v}%
_{U}(y,t)-\widetilde{v}_{U}(x,t)\right\}  w_{U}(\left\vert x-y\right\vert
_{p})dy;}\\
\multicolumn{1}{r}{}\\
\multicolumn{1}{r}{\frac{\partial\widetilde{v}_{N}(x,t)}{\partial t}%
=\lambda_{N,U}\text{ }\Omega\left(  p\left\vert x\right\vert _{p}\right)
Aver\widetilde{v}_{U}\left(  \cdot,t\right)  -p^{-1}\left(  \overline{\mu}%
_{N}-\lambda_{N,N}\right)  \text{ }\widetilde{v}_{N}\left(  x,t\right) }\\
\\
\multicolumn{1}{r}{+\int\limits_{p\mathbb{Z}_{p}}\left\{  \widetilde{v}%
_{N}(y,t)-\widetilde{v}_{N}(x,t)\right\}  w_{N}(\left\vert x-y\right\vert
_{p})dy,}%
\end{array}
\right.  \label{Master_Equation_5}%
\end{equation}
where $\overline{\lambda}_{U}=\lambda_{U,U}+\lambda_{U,N}$, $\overline
{\lambda}_{N}=\lambda_{N,N}+\lambda_{N,U}$, $\overline{\mu}_{U}=\mu_{U,U}%
+\mu_{U,N}$, $\overline{\mu}_{N}=\mu_{N,N}+\mu_{N,U}$, and%
\[
S_{a}\left(  x\right)  =p^{-1}\left(  \overline{\mu}_{a}-\overline{\lambda
}_{a}\right)  \text{, }a\in\left\{  N,U\right\}  \text{ and }\lambda_{a,b}%
\leq\mu_{b,a}\text{, }a,b\in\left\{  N,U\right\}  .
\]
For convenience we use $\widetilde{v}(x,t)$ instead of $\widetilde{u}(x,t)$.
The matrix $\Lambda$\ is%
\[
\Lambda=\left[
\begin{array}
[c]{cc}%
-p^{-1}\left(  \overline{\mu}_{U}-\lambda_{U,U}\right)  & \lambda_{U,N}\\
& \\
\lambda_{N,U} & -p^{-1}\left(  \overline{\mu}_{N}-\lambda_{N,N}\right)
\end{array}
\right]  ,
\]
we assume that $\lambda_{N,U}=\lambda_{U,N}$. Using (\ref{Solution_IVP}), \ we
now pick a solution of the form%
\[
\left[
\begin{array}
[c]{c}%
\widetilde{v}_{U}\left(  x,t\right) \\
\\
\widetilde{v}_{N}\left(  x,t\right)
\end{array}
\right]  =p^{\frac{1}{2}}\left[
\begin{array}
[c]{c}%
C_{0}^{\left(  U\right)  }\left(  t\right)  \Omega\left(  p\left\vert
x\right\vert _{p}\right) \\
\\
C_{0}^{\left(  N\right)  }\left(  t\right)  \Omega\left(  p\left\vert
x\right\vert _{p}\right)
\end{array}
\right]  +\left[
\begin{array}
[c]{c}%
0\\
\\
C_{r}^{\left(  N\right)  }e^{\left(  \widehat{w}_{N}\left(  p^{1-r}\right)
-p^{-1}\overline{\mu}_{N}\right)  t}\cos\left(  \left\{  p^{-r-1}x\right\}
_{p}\right)
\end{array}
\right]  ,
\]
where $r>1$, and%
\[
\left[
\begin{array}
[c]{c}%
C_{0}^{\left(  U\right)  }\left(  t\right) \\
\\
C_{0}^{\left(  N\right)  }\left(  t\right)
\end{array}
\right]  =e^{t\Lambda}\left[
\begin{array}
[c]{c}%
p^{\frac{-1}{2}}\\
\\
0
\end{array}
\right]  .
\]
We apply the results of the previous section with $\beta=p^{-1}\left(
\overline{\mu}_{U}-\lambda_{U,U}\right)  $, $\alpha=\lambda_{N,U}%
=\lambda_{U,N}$, and $\gamma=p^{-1}\left(  \overline{\mu}_{N}-\lambda
_{N,N}\right)  $:
\begin{align*}
\left[
\begin{array}
[c]{c}%
C_{0}^{\left(  U\right)  }\left(  t\right) \\
\\
C_{0}^{\left(  N\right)  }\left(  t\right)
\end{array}
\right]   &  =e^{t\Lambda}\left[
\begin{array}
[c]{c}%
p^{\frac{-1}{2}}\\
\\
0
\end{array}
\right]  \simeq\\
&  e^{t\left(  -\frac{1}{2}\beta-\frac{1}{2}\gamma+\frac{1}{2}A\right)
}\left[
\begin{array}
[c]{cc}%
\frac{1}{2A}\left(  -\beta+\gamma+A\right)  & \frac{1}{A}\alpha\\
& \\
\frac{\alpha}{A} & \frac{1}{2A}\left(  \beta-\gamma+A\right)
\end{array}
\right]  \left[
\begin{array}
[c]{c}%
p^{\frac{-1}{2}}\\
\\
0
\end{array}
\right] \\
&  =e^{t\left(  -\frac{1}{2}\beta-\frac{1}{2}\gamma+\frac{1}{2}A\right)
}\left[
\begin{array}
[c]{c}%
\frac{p^{\frac{-1}{2}}}{2A}\left(  A-\beta+\gamma\right) \\
\\
\frac{p^{\frac{-1}{2}}}{A}\alpha
\end{array}
\right]  .
\end{align*}
Which implies that%
\begin{equation}
\left[
\begin{array}
[c]{c}%
\widetilde{v}_{U}\left(  x,0\right) \\
\\
\widetilde{v}_{N}\left(  x,0\right)
\end{array}
\right]  =\left[
\begin{array}
[c]{c}%
\frac{1}{2A}\left(  A-\beta+\gamma\right)  \Omega\left(  p\left\vert
x\right\vert _{p}\right) \\
\\
\frac{\alpha}{A}\Omega\left(  p\left\vert x\right\vert _{p}\right)
\end{array}
\right]  +\left[
\begin{array}
[c]{c}%
0\\
\\
C_{r}^{\left(  N\right)  }\cos\left(  \left\{  p^{-1-r}x\right\}  _{p}\right)
\end{array}
\right]  . \label{IVP_II}%
\end{equation}
Notice that $A-\beta+\gamma>0$.

The function $\widetilde{v}_{U}\left(  x,0\right)  $ gives the concentration
(or population) at time zero for the basin $U$. In particular, the average
concentration of basin $U$ at time zero is $\int\widetilde{v}_{U}\left(
x,0\right)  dx=\frac{1}{2Ap}\left(  A-\beta+\gamma\right)  $. The function
$\widetilde{v}_{N}\left(  x,0\right)  $ gives the concentration at time zero
for the basin $N$. The average concentration for basin $N$ at time zero is
$\int\widetilde{v}_{N}\left(  x,0\right)  dx=\frac{\alpha}{Ap}\simeq0$. Notice
that the functions $\widetilde{v}_{U}\left(  x,0\right)  $, $\widetilde{v}%
_{N}\left(  x,0\right)  $ cannot be interpreted as probability densities since
they do not satisfy $\int\widetilde{v}_{U}\left(  x,0\right)  dx+\int
\widetilde{v}_{N}\left(  x,0\right)  dx=1$. We do not use this last
normalization, since the calculation without imposing this normalization is simpler.

Our goal is to show that the network (\ref{Master_Equation_5}), with initial
density (\ref{IVP_II}) gets trapped in the basin $N$ in a finite time.

\subsection{Computation of $T(v\left(  x,0\right)  )$}

In order to compute $T(v\left(  x,0\right)  )$, we require the conditions:%
\[
0\leq\frac{1}{2A}\left(  A-\beta+\gamma\right)  \leq1\text{ and }0\leq
\frac{\alpha}{A}+C_{r}^{\left(  N\right)  }\cos\left(  \left\{  p^{-1-r}%
x\right\}  _{p}\right)  \leq1,
\]
for every $x\in p\mathbb{Z}p$. The first condition is true. The last condition
is satisfied if $0\leq\frac{\alpha}{A}+C_{r}^{\left(  N\right)  }\leq1$, with
$C_{r}^{\left(  N\right)  }>0$. Indeed, for $x=x_{1}p+x_{2}p_{2}+\cdots
+x_{r}p^{r}+\cdots\in p\mathbb{Z}p$,
\[
\left\{  p^{-1-r}x\right\}  _{p}=\frac{x_{1}}{p^{r}}+\frac{x_{2}}{p^{r-1}%
}+\cdots+x_{r}p^{r},
\]
and since the $x_{i}\in\left\{  0,\ldots,p-1\right\}  $, we have
\begin{equation}
\left\{  p^{-1-r}x\right\}  _{p}\in\left[  0,1-p^{-r}\right]  ,
\label{Condition_r}%
\end{equation}
and thus $\cos\left(  \left\{  p^{-1-r}x\right\}  _{p}\right)  >0$.

We now look for solutions of the condition:%
\begin{equation}
e^{t\left(  -\frac{1}{2}\beta-\frac{1}{2}\gamma+\frac{1}{2}A\right)  }%
\frac{\alpha}{A}+C_{r}^{\left(  N\right)  }e^{\left(  \widehat{w}_{N}\left(
p^{1-r}\right)  -p^{-1}\overline{\mu}_{N}\right)  t}\cos\left(  \left\{
p^{-1-r}x\right\}  _{p}\right)  \geq1. \label{Condition_1}%
\end{equation}
We set
\[
\sigma:=\max\left\{  -\frac{1}{2}\beta-\frac{1}{2}\gamma+\frac{1}{2}%
A,\widehat{w}_{N}\left(  p^{1-r}\right)  -p^{-1}\overline{\mu}_{N}\right\}
<0,
\]
then%
\begin{align*}
e^{t\sigma}\left(  \frac{\alpha}{A}+C_{r}^{\left(  N\right)  }\cos\left(
\left\{  p^{-1-r}x\right\}  _{p}\right)  \right)   &  \geq\\
e^{t\left(  -\frac{1}{2}\beta-\frac{1}{2}\gamma+\frac{1}{2}A\right)  }%
\frac{\alpha}{A}+C_{r}^{\left(  N\right)  }e^{\left(  \widehat{w}_{N}\left(
p^{1-r}\right)  -p^{-1}\overline{\mu}_{N}\right)  t}\cos\left(  \left\{
p^{-1-r}x\right\}  _{p}\right)   &  \geq1,
\end{align*}
and in order to find solutions for condition (\ref{Condition_1}), we can find
solutions for
\[
e^{t\sigma}\left(  \frac{\alpha}{A}+C_{r}^{\left(  N\right)  }\cos\left(
\left\{  p^{-1-r}x\right\}  _{p}\right)  \right)  \geq1\text{, }
\]
i.e.,
\[
\cos\left(  \left\{  p^{-1-r}x\right\}  _{p}\right)  \geq\frac{1}%
{C_{r}^{\left(  N\right)  }}\left(  e^{-t\sigma}-\frac{\alpha}{A}\right)  .
\]
Now%
\[
\frac{1}{C_{r}^{\left(  N\right)  }}\left(  e^{-T\sigma}-\frac{\alpha}%
{A}\right)  =1\Leftrightarrow T=\frac{1}{-\sigma}\ln\left(  C_{r}^{\left(
N\right)  }+\frac{\alpha}{A}\right)  .
\]
Then, by using the fact that $e^{-t\sigma}-\frac{\alpha}{A}$ is an increasing
function in $\left[  0,\infty\right)  $, given $\delta>0$ sufficiently small,
there exists $\epsilon=\epsilon\left(  \delta\right)  $ such that $\frac
{1}{C_{r}^{\left(  N\right)  }}\left(  e^{-t\sigma}-\frac{\alpha}{A}\right)
\in\left[  1-\epsilon,1\right]  $ for $t\in\left[  T-\delta,T\right]  $. We
now look for solutions of
\begin{equation}
\cos\left(  \left\{  p^{-r-1}x\right\}  _{p}\right)  \in\left[  1-\epsilon
,1\right]  \Leftrightarrow\left\{  p^{-r-1}x\right\}  _{p}\in\left[
0,\epsilon^{^{\prime}}\right]  . \label{Condition_epsilon}%
\end{equation}
for some $\epsilon^{^{\prime}}=\epsilon^{^{\prime}}\left(  \epsilon\right)  $.
Taking $-r$ is sufficiently large and using (\ref{Condition_r}), we conclude
that $\left\{  p^{-r-1}x\right\}  _{p}\in\left[  0,1-p^{-r}\right]
\subset\left[  0,\epsilon^{^{\prime}}\right]  $, and consequently, there
exists $x_{0}$ satisfying (\ref{Condition_epsilon}). We now use that $\left\{
p^{r-1}x\right\}  _{p}$ is locally constant and that $\left\{  0\right\}
_{p}=0 $ to obtain a ball $B_{l}(x_{0})$ such that $\left\{  p^{r-1}x\right\}
_{p}=0$ for any $x\in B_{l}(x_{0})$.

\subsection{Discussion}

A protein is a self-organized system which is able to lower its entropy when
it is in contact with a heat reservoir at appropriate temperature. The protein
folding consists of a sequence of random motions in funnel-like configuration
space from high entropy states to a state of minimum entropy (a native state).
However, there is still the problem (the so-called Levinthal paradox), which
is to know if the random transitions may get trapped in an intermediate state,
and then the sequence of the transitions will not end in a state of minimum
entropy. On the other hand, it is well accepted that if the temperature is too
high the protein will not fold. It is also possible that the protein will not
fold correctly if the temperature is too low. We assume a range of
temperatures where the protein folding occurs. In our toy model the energy
landscape of the protein is divided into two regions $U$ (unfolded state), $N$
(native state). At the time zero the concentration in the basins $U$, $N$ is
given by (\ref{IVP_II}). The average concentration of the basin $N$ is very
close to zero. The time evolution of the concentrations $\widetilde{v}%
_{U}(x,t)$, $\widetilde{v}_{N}(x,t)$ is controlled by the master equation
(\ref{Master_Equation_5}). The solution of the Cauchy problem attached to the
master equation (\ref{Master_Equation_5}), with the initial datum
(\ref{IVP_II}), admits a fast mode $C_{r}^{\left(  N\right)  }e^{\left(
\widehat{w}_{N}\left(  p^{1-r}\right)  -p^{-1}\overline{\mu}_{N}\right)
t}\cos\left(  \left\{  p^{-1-r}x\right\}  _{p}\right)  $, which drives the
network into an absorbing sate inside the basin $N$ at the time
\[
\tau=\frac{\ln\left(  C_{r}^{\left(  N\right)  }+\frac{\alpha}{A}\right)
}{\min\left\{  \frac{1}{2}\beta+\frac{1}{2}\gamma-\frac{1}{2}A,p^{-1}%
\overline{\mu}_{N}-\widehat{w}_{N}\left(  p^{1-r}\right)  \right\}  }\text{.}%
\]
The system has two time scales (or two time constants), under the hypothesis
that $A$ is sufficiently large, the time constants are given by%
\[
\frac{1}{\frac{1}{2}\beta+\frac{1}{2}\gamma-\frac{1}{2}A}\text{, \ \ \ }%
\frac{1}{p^{-1}\overline{\mu}_{N}-\widehat{w}_{N}\left(  p^{1-r}\right)
}\text{.}%
\]
The first one is associated with the matrix $\Lambda$ ( the underlying Markov
chain), and the second with the fast mode. Here we do not use the constant $4$
that appears in the definition of the time constant of second order systems.
Finally, following \cite{Av-4}-\cite{Av-5} the dependency on the temperature
may be introduced by taking $\widehat{w}_{N}\left(  p^{1-r}\right)
=\widehat{w}_{N}\left(  p^{1-r},kT\right)  $.

\end{document}